\definecolor{SHCcolor}{rgb}{0.9,0.1,0.1}
\definecolor{GHcolor}{rgb}{0.9,0.1,0.1}
\newcommand{\ignore}[1]{}
\newtheorem{nthm}{Theorem}[]
\newtheorem{nprop}{Proposition}
\newtheorem{ndefi}{Definition}
\renewcommand{\vec}[1]{\boldsymbol{\mathbf{#1}}}
\renewcommand{\(}{\left(}
\renewcommand{\)}{\right)}
\newcommand{\nd}{\quad\textrm{and}\quad}
\newcommand{\st}{\, \mid \,}
\newcommand{\where}{\quad\text{where}\quad}
\DeclareMathOperator{\codim}{codim}
\DeclareMathOperator{\id}{id}
\DeclareMathOperator{\vol}{vol}
\newcommand{\R}{\mathbb{R}}
\newcommand{\s}{\mathbb{S}}
\newcommand{\Z}{\mathbb{Z}}
\def\@fpheader{~}\makeatother
\newcommand{\ads}{\text{AdS}_n}
\newcommand{\adss}{\text{AdS}_n\times\s^k}
\newcommand{\sk}{\s^k}
\title{Bulk reconstruction of metrics with a compact space asymptotically}
\author{Sergio Hern\'{a}ndez-Cuenca and Gary T. Horowitz}
\affiliation{Department of Physics, University of California, Santa Barbara, CA 93106, USA}
\emailAdd{sergiohc@physics.ucsb.edu}
\emailAdd{horowitz@ucsb.edu}
\abstract{

Holographic duality implies that the geometric properties of the gravitational bulk theory should be encoded in the dual field theory. These naturally include the metric on dimensions that become compact near the conformal boundary, as is the case for any asymptotically locally $\text{AdS}_n\times\mathbb{S}^k$ spacetime. Almost all previous work on metric reconstruction ignores these dimensions and would thus at most apply to dimensionally-reduced metrics. In this work, we generalize the approach to bulk reconstruction using light-cone cuts and propose a prescription to obtain the full higher-dimensional metric of generic spacetimes up to an overall conformal factor. We first extend the definition of light-cone cuts to include information about the asymptotic compact dimensions, and show that the full conformal metric can be recovered from these extended cuts. We then give a prescription for obtaining these extended cuts from the dual field theory. The location of the usual cuts can still be obtained from bulk-point singularities of correlators, and the new information in the extended cut can be extracted by using appropriate combinations of operators dual to Kaluza-Klein modes of the higher-dimensional bulk fields.
}
\begin{document}


\maketitle


\section{Introduction}\label{sec:intro}
A central question in holography \cite{Maldacena_1999,Witten1998,Gubser:1998bc,Aharony_2000} is to understand how spacetime geometry emerges from the dual field theory. The standard discussions of entanglement wedge reconstruction do not address this since they depend on a choice of code subspace which represents small fluctuations about a given semiclassical bulk geometry (or perhaps a finite number of such geometries) \cite{Harlow:2018fse}. The idea of geometry emerging from entanglement \cite{VanRaamsdonk:2010pw} has led to various attempts to determine the bulk metric from measures of entanglement \cite{Balasubramanian:2013rqa,Balasubramanian:2013lsa,Myers:2014jia,Czech:2014wka,Headrick:2014eia,Czech:2014ppa,Czech:2015qta,Cao:2016mst}, in particular via the geometrization in the bulk of the von Neumann entropy of boundary regions \cite{Ryu_2006,Hubeny_2007}. Recently, it has been shown that a bulk geometry (if it exists) is uniquely determined by second order variations of the area of two-dimensional extremal surfaces anchored to a certain family of regions on the boundary \cite{Bao_2019}.\footnote{These areas correspond to the entanglement entropy of boundary regions when the bulk is four-dimensional.
In other dimensions, they are related to expectation values of Wilson loops in some cases, but their general holographic interpretation is not well understood \cite{Bao_2019,Bao:2019hwq,Maldacena:1998im}.}

In \cite{Engelhardt2016,Engelhardt2017} a very different approach toward reconstructing the bulk geometry was presented. This involves special cross-sections of the conformal boundary of an asymptotically Anti-de Sitter (AdS) spacetime. These cross-sections are called light-cone cuts, and can be thought of as the intersection of the past (or future) light cones of bulk points with the boundary.\footnote{Analogous cross-sections of null infinity in asymptotically flat spacetimes were first introduced in \cite{Newman:1976gc,Hansen:1978jz}, and shown to encode the conformal metric of such spacetimes in \cite{Kozameh:1983yu}. However, holography or the presence of internal spaces played no role in these discussions.} (A more precise definition will be given in the next section.) It was shown that knowledge of these light-cone cuts is sufficient to determine the conformal metric in the bulk, i.e. the metric up to an overall local rescaling, for most points causally related to the boundary.\footnote{An explicit implementation of the light-cone cut approach to bulk reconstruction was explored in \cite{Trevino:2017mik}. The light-cone cut formalism was also used in \cite{Engelhardt:2016vdk} to covariantize the notion of bulk depth and relate it to energy scales in the dual field theory.} It was further shown how to determine these light-cone cuts from singularities in certain time-ordered Lorentzian correlators in the dual field theory which originate precisely from bulk locality \cite{Gary:2009ae,Maldacena2015}. 

The results in \cite{Engelhardt2016,Engelhardt2017}, as well as those which employ entanglement entropy, apply to spacetimes which are asymptotically AdS. However, the most well studied models of holography require spacetimes to approach $\adss$. The goal of this paper is to extend the analysis of light-cone cuts to these more general spacetimes. (We will always assume $n>2$, since for $n=2$ the light-cone cut consists of isolated points and does not determine the conformal metric.) It is easy to see that a naive, straightforward attempt to apply light-cone cuts to spacetimes with compact extra dimensions will fail to determine the bulk geometry. However, we will show that there is a generalization of light-cone cuts that we call ``extended cuts'', that indeed determine the conformal metric of the full higher-dimensional spacetime. We will then propose a method for obtaining these extended cuts from the dual field theory.

The basic idea behind our extended cuts is the following. Every null geodesic which reaches the boundary of AdS approaches a fixed point on $\sk$. This is simply because a geodesic with asymptotic motion on the sphere acts like a massive particle in AdS and will not reach the boundary. Thus for every point on our light-cone cut, we can associate a point on $\sk$. Our extended cut is just the original light-cone cut $C(p)$ together with a map $ C(p) \to \sk$ specifying the asymptotic location on  $\sk$ of the null geodesics from $p$ to $C(p)$. We show in Section \ref{sec:exlcc} that this map is precisely the extra information that is needed to  reconstruct the full bulk conformal metric of generic spacetimes (if the spacetime has symmetries that asymptotically act only on the internal space, this approach may fail). In Section \ref{sec:bdydata}, we propose a way to determine the extended cuts from the dual field theory, using correlation functions involving the operators dual to the Kaluza-Klein modes of the higher-dimensional bulk field. Our approach does not use any particular property of $\sk$ and should work equally well for a general compact internal space.

To completely determine the bulk geometry, one also needs to know the conformal factor. This remains an open problem in general, however it is known how do to this for some special asymptotically AdS spacetimes \cite{Engelhardt2017}.

\section{Review of Light-Cone Cuts}\label{sec:review}
In this section we review the construction in \cite{Engelhardt2016,Engelhardt2017} for obtaining the conformal metric from the dual field theory. Two metrics $g$ and $\bar{g}$ are conformally related if there is a positive function $f$ such that $\bar g_{\mu\nu} = f^2 g_{\mu\nu}$. Points in spacetimes with conformally related metrics clearly have the same light cone, but one does not need to know the entire light cone (or even an open subset of it) to determine the conformal metric at those points; a sufficient number of null vectors will do. This can be seen as follows. In a $D$-dimensional spacetime, take $D$ linearly independent null vectors $\ell_i$ at a point $p$.
Since the $\ell_{i}$ all have zero norm, the conformal metric at $p$ is fixed by their inner products. To determine them, take a new collection of null vectors, $\eta_k$, and expand them in terms of the null basis $\ell_i$: 
\begin{equation} \eta_{k} = \sum\limits_{i} M_{ki}\ell_{i}.
\end{equation} 
Using the fact that each $\eta_{k}$ has zero norm, we obtain a set of algebraic equations for the inner products $\ell_i \cdot \ell_j$:
\begin{equation}0= \eta_{k}\cdot \eta_{k} = \sum_{i,j=1}^{D} M_{ki}M_{kj} (\ell_{i}\cdot \ell_{j}) \qquad {\rm no\ sum\ on \ }k.
\end{equation}
While it is not always true that such equations have a solution, we are guaranteed a solution here precisely because these equations describe a Lorentzian metric which by construction exists. By choosing at least $D(D-1)/2$ vectors $\eta_k$, the solution will be unique up to an overall constant rescaling of all inner products. This determines the conformal metric at $p$. Repeating this local construction at each point in a spacetime region $U$ determines the conformal metric on $U$.

Our goal is to determine these null vectors at $p$ from boundary data. Due to gravitational lensing, the light cone of a bulk point $p$ can develop caustics. When this happens, some null geodesics reach points that are timelike related to $p$. Since we want boundary points that are null-related to $p$ we proceed as follows.

Let $(M,g)$ be an asymptotically locally AdS spacetime (without compact extra dimensions) with conformal boundary $\partial M$, and denote its conformal compactification by $(\bar{M},\bar{g})$. Recall that the causal past $J^{-}(p)$ of a point $p\in M$ is the set of points in $M$ which can be reached by a past-directed causal curve starting at $p$. $J^{+}(p)$ is defined similarly with ``future" replacing ``past". A spacetime is said to be AdS-hyperbolic if there exist no closed causal curves and for any two points $p,q\in M$, the set $J^+(p)\cap J^-(q)$ is compact in the conformally compactified spacetime $\bar{M}$ \cite{Wall_2014}. We will assume our spacetime is $C^2$ differentiable, maximally extended, connected, and AdS-hyperbolic. The future/past light-cone cut $C^\pm(p)$ of a point $p\in M$ is defined as the intersection of the boundary of the causal future/past of $p$, $\partial J^\pm(p)$, with the conformal boundary $\partial M$, i.e.
\begin{equation}\label{eq:cutdef}
C^\pm(p) \equiv \partial J^\pm(p) \cap \partial M.
\end{equation}
This is illustrated in Fig. \ref{fig:lcc}. We will use $C(p)$ to denote either the future or past cut of a bulk point $p$. Light-cone cuts are not differentiable everywhere since they can have cusps due to caustics. However, it can be shown that the cusps form a set of measure zero within the cut (cf. Proposition \ref{prop:zeromes} in Section \ref{ssec:geocuts}).

\begin{figure}
    \centering
    \includegraphics[height=.4\textheight]{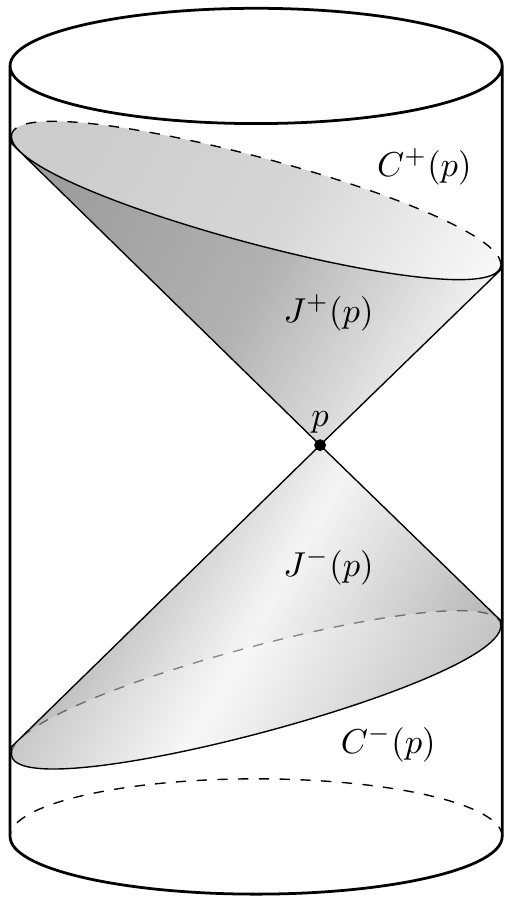}
    \hspace{50pt}
    \includegraphics[height=.4\textheight]{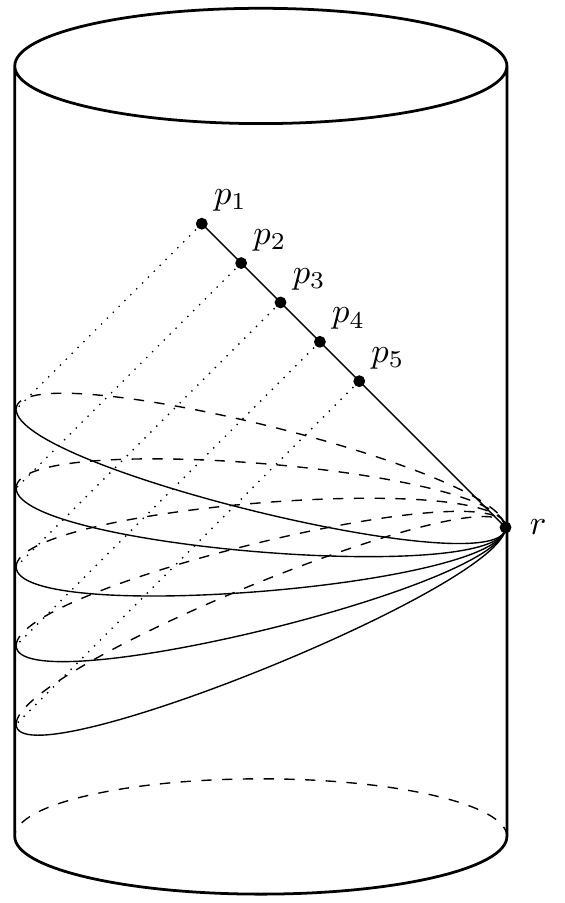}
    \caption{In the left figure, the shaded region illustrates the future and past light cones $\partial J^\pm(p)$ of a bulk point $p\in M$ in causal contact with the boundary in an asymptotically AdS spacetime $M$. Their intersections with the conformal boundary $\partial M$ define the future and past light-cone cuts $C^\pm(p)$, which are complete spatial slices of $\partial M$ (cf. properties \ref{complete} and \ref{1to1}). The right figure shows how a sequence of null-related bulk points $\{p_i\in M\}$ corresponds to a set of light-cone cuts which all intersect at a single point $r\in \partial M$ (cf. property \ref{null}).}
    \label{fig:lcc}
\end{figure}

It was shown in \cite{Engelhardt2016} that light-cone cuts satisfy the following properties:
\begin{enumerate}
\item\label{complete} $C(p)$ is a complete spatial slice of the conformal boundary.

\item\label{1to1} There is a one-to-one, onto map from past light-cone cuts to points in the future of the boundary, even inside black holes. (A similar statement holds for future cuts.)

\item\label{open} Two distinct cuts cannot agree on an open set.

\item\label{null} If $C(p)$ and $C(q)$ intersect at precisely one point, and both cuts are $C^1$ at this point, then $p$ and $q$ are null-separated.

\end{enumerate}

Using these properties, it is easy to construct the bulk conformal metric given the light-cone cuts. Property \ref{1to1} says that the set of past cuts represents all points to the future of the boundary. Property \ref{null} says that given a light-cone cut $C(p)$, the set of cuts $C(q)$ which are tangent to $C(p)$ at a regular point $r \in C(p)$ represents  a null curve passing through $p$, as illustrated in Fig. \ref{fig:lcc}. Repeating this for $D(D+1)/2$ cut points $r$ allows one to reconstruct the conformal metric at $p$.\footnote{One needs $D$ points for the basis vectors $\ell_i$, and $D(D-1)/2$ for the null vectors $\eta_k$ used to determine the inner products.} It is clear that a basis of null vectors $\ell_i$ at $p$ can be obtained this way, since the light-cone cut $C(p)$ enables one to reconstruct an open subset of the light cone at $p$.

The second half of the construction is a procedure for determining $C(p)$ from the dual field theory without using the bulk geometry. This is achieved using the notion of bulk-point singularities, first argued for in \cite{Gary:2009ae} and later studied in \cite{Maldacena2015}. Given $D$ boundary points in a $D$-dimensional spacetime, the only subset of $M$ which can be null-related to all of them are individual points. It was shown in \cite{Gary:2009ae,Maldacena2015} that a time-ordered Lorentzian $(D+1)$-point correlator on the boundary of AdS is singular when there exists a momentum-preserving scattering point in the bulk that is null-related to all of them (i.e. if one can draw a position-space Landau diagram with null lines in the bulk).\footnote{Even though a single bulk point $p$ can be fixed by the condition that it is null-related to $D$ boundary points, one needs at least one extra point in the correlator to ensure that energy-momentum is conserved at $p$.}  This is the case if, for example, one chooses two points in the past cut $x_1, x_2\in C^-(p)$, and $D-1$ points in the future cut $x_i\in C^+(p)$ of a bulk point $p$, in a manner similar to Fig. \ref{fig:bpc}. Then, physically, high energy quanta from $x_1$ and $x_2$ can scatter at $p$ conserving energy-momentum and send high energy quanta to the remaining $x_i$ in the future, which results in a singular correlator. In special cases, only derivatives of the correlator will diverge and the correlator itself may remain finite. However, for most operators, the correlator itself will diverge and we will use such operators below.

\begin{figure}
    \centering
    \includegraphics[height=.4\textheight]{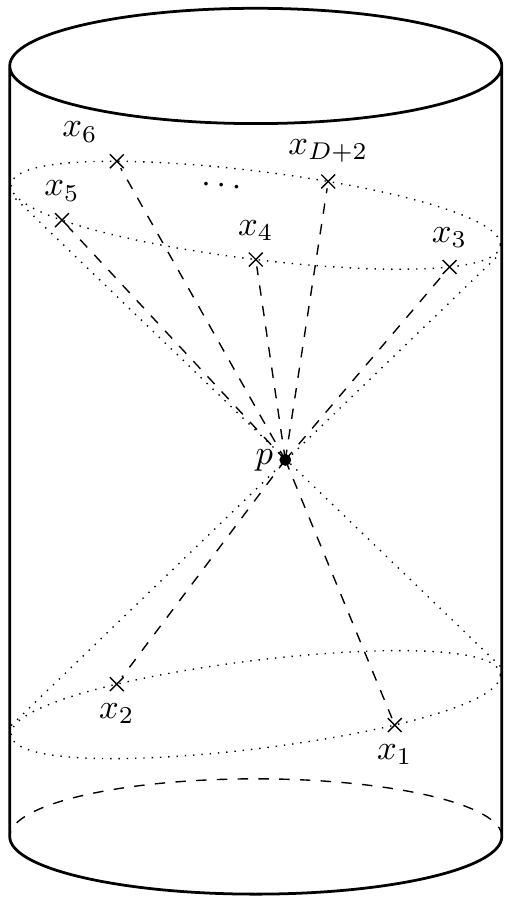}
    \caption{Position-space Landau diagram for a boundary correlator with a bulk-point singularity from $p\in M$ used to obtain light-cone cuts from the dual field theory. For a $D$-dimensional bulk, the $D$ boundary points in the future already specify $p$ as the unique bulk point null-related to all of them. The two points in the past can be rotated around maintaining momentum conservation at the interaction vertex (and hence the divergence in the correlator) to trace out the light-cone cut.}
    \label{fig:bpc}
\end{figure}

To use this to find the light-cone cuts we need two modifications. First, we consider correlation functions in certain excited states, not the ground state, so the dual spacetime is only asymptotically locally AdS and not pure global AdS. Second, we consider $(D+2)$-point correlators, with two points $x_1$ and $x_2$ in the past and $D$ points {$x_3,\dots,x_{D+2}$} in the future (see Fig. \ref{fig:bpc}). In this case, if there is a bulk point $p$ null-related to all the $x_i$ to the future, it will remain fixed if we move the ones in the past. Starting with a configuration of points where the correlator diverges, we can thus move $x_1$ and $x_2$ in a coordinated manner keeping the correlator singular to trace out the past cut of $p$.\footnote{One could actually work with $D+1$ points and still move one vertex in a limited way to trace out part of the light-cone cut, but one has more freedom to trace out the entire cut by adding an additional point. One must also minimize the time difference between the points in the past and future to avoid caustics along the null geodesics from the bulk point to the boundary.}


\section{Extended Light-Cone Cuts}\label{sec:exlcc}

In this section we extend the discussion of light-cone cuts to spacetimes that have a compact space asymptotically such as $\adss$. The presence of this compact space implies that most of the null geodesics on the light cone of a bulk point $p$ end up crossing other null geodesics and entering the interior of $J(p)$. Only a small subset of these null geodesics stay on the boundary of $J(p)$ and form the light-cone cut. To illustrate this, consider the three-dimensional flat spacetime $ds^2 = -dt^2 + dz^2 + d\chi^2$, with $\chi$ periodically identified. Starting at any point $p$, all null geodesics with $\dot\chi \ne 0$ will go around the $\s^1$ and reach points timelike related to $p$. The only ones that stay on $\partial J(p)$ are those with $\dot\chi = 0$. More generally, for spacetimes locally asymptotic to $\adss$, the light-cone cut has bulk codimension $k+2$ rather than $2$. This means that one cannot recover an open subset of the light cone of a bulk point $p$. Fortunately, as reviewed above, one does not need an open subset of the light cone to recover the conformal metric at $p$. All one needs is a basis of null vectors and some additional null vectors. As we discuss below, this can be obtained in generic spacetimes from a simple generalization of the light-cone cut.

For asymptotically locally $\adss$ spacetimes, one way to understand the reduction in the size of the light-cone cut is by noting that the conformal boundary of $\adss$ is degenerate, in the sense that it is codimension $k+1$ rather than $1$ \cite{taylorrobinson2000holography}. Indeed, the $\sk$ factor of the direct product shrinks to zero size and leaves a boundary manifold $\partial M$ which is locally isometric to the conformal boundary of just the $\ads$ part.

The presence of a degenerate boundary turns out to invalidate most results proven in \cite{Engelhardt2017,Engelhardt2016}. Fortunately, it is possible to recover them with appropriate generalizations of the framework. To motivate the solution, let us first understand the complications that arise when the boundary is degenerate. In particular, consider the following two results from \cite{Engelhardt2016} (cf. properties \ref{open} and \ref{null} reviewed in Section \ref{sec:review}) and counterexamples to them already in the simple case of global $\adss$:

\begin{itemize}
	\item \emph{$C(p) \cap C(q)$ contains a nonempty open set if and only if $p = q$}: For any two points $p$ and $q$ on $\adss$ with the same global coordinates on the AdS part one has $C(p) = C(q)$, even if they have different coordinates on the sphere. More precisely, thinking of the compactification space $\sk$ as a fiber of a trivial bundle $\pi : \adss \to \ads$, this means that $C(p)=C(q)$ for any $p,q\in\adss$ with the same base space point $\pi(p)=\pi(q)$, implying that light-cone cuts do not distinguish points on the fibers.
	\item \emph{If $C(p)$ and $C(q)$ intersect at precisely one point, and both cuts are $C^1$ at this point, then $p$ and $q$ are distinct and null-separated}: To falsify this claim, consider an arbitrary point $p\in\adss$ and another null-separated point $q\in\partial J(p)$ such that the null-geodesic between $p$ and $q$ reaches $\partial M$ at some point $r\in C(p)\cap C(q)$. It is easy to see that their light-cone cuts will indeed intersect precisely only at $r$, and that both cuts will be $C^1$ at this point (since the spacetime is pure $\ads$). Now take another point $\tilde{q}$ which is at the same AdS location as $ q$, but at a different point on the sphere. Since the metric on the sphere is Euclidean, $p$ and $\tilde{q}$ will be spacelike-separated. But from the counterexample to the previous claim, one still has $C(\tilde{q})=C(q)$. Altogether, this shows that $C(p)$ and $C(\tilde{q})$ intersect at precisely one point, both cuts are $C^1$ at this point, but $p$ and $\tilde{q}$ are spacelike-separated, thus contradicting the statement above.
\end{itemize}

As anticipated, the existence of these counterexamples can be traced back to the fact that the light cone $\partial J(p)$ of a bulk point $p\in M$ degenerates asymptotically in essentially the same way the conformal boundary does. More precisely, suppose a boundary observer wanted to resolve the compact dimensions by introducing a regulated boundary $\partial M_\epsilon$ at a finite UV cutoff $0<\epsilon\ll1$, with $\lim_{\epsilon \to 0} \partial M_\epsilon = \partial M$. On $\partial M_\epsilon$, the dimensions of $\sk$ are restored and one has $\codim \partial M_\epsilon = 1$, the dimensionality only dropping by $k$ in the strict limit $\epsilon\to0$. Similarly, intersecting $\partial J(p)$ with the regulated boundary $\partial M_\epsilon$, one sees that the corresponding regulated light-cone cut $C_\epsilon(p) = \partial J(p) \cap \partial M_\epsilon$ is now bulk-codimension $2$, the dimensionality only decreasing by $k$ in the strict limit $\epsilon\to0$. 

Crucially, under the pertinent assumptions, all results proven in \cite{Engelhardt2017,Engelhardt2016} apply now to regulated light-cone cuts. However, because the dual field theory does not gain any dimensions, we need to find a way to retain this information in the limit $\epsilon\to0$. Unsurprisingly, this will require supplementing the standard cuts $C(p)$ with some information from $C_\epsilon(p)$. Precisely how the $\epsilon\to0$ limit of $C_\epsilon(p)$ can be used to extend $C(p)$ sufficiently for the light-cone cut reconstruction to succeed is the subject of this section.

\subsection{Asymptotics of spacetimes with degenerate boundaries}\label{ssec:asymp}

The first step is to have an elementary understanding of how null geodesics behave asymptotically in spacetimes with an internal space. Henceforth, the bulk spacetime $M$ is assumed to be asymptotically locally isometric to $\adss$, whose metric in global coordinates reads
\begin{equation}\label{eq:gmetric}
g = - f(r) dt^2 + f(r)^{-1} dr^2 + r^2 d\Omega_{n-2}^2 + \ell^2 d\Omega_{k}^2 \where f(r) = 1 + \frac{r^2}{\ell^2}.
\end{equation}
Here $\ell$ is the radius of curvature of $\ads$, and the shorthand $\Omega_d$ is used to collectively refer to all coordinates on $\s^d$. Define dimensionless time $\tau$ and radial $\rho$ coordinates via $\tau = t/\ell$ and $ r = \ell\tan\rho $, so that \eqref{eq:gmetric} becomes
\begin{equation}\label{eq:bargfirst}
g = \frac{\ell^2}{\cos^2\rho} \( - d\tau^2 + d\rho^2 + \sin^2\rho \, d\Omega_{n-2}^2 + \cos^2\rho \, d\Omega_{k}^2 \).
\end{equation}
Since null geodesics are only sensitive to the causal structure, which depends just on the conformal class of the metric, consider a Weyl rescaling $ g \mapsto \bar{g} = \omega^2 g $, with $\ell \omega = \cos\rho$. This gives
\begin{equation}\label{eq:barg}
\bar{g} = - d\tau^2 + d\rho^2 + \sin^2 \rho \, d\Omega_{n-2}^2 + \cos^2 \rho \, d\Omega_{k}^2,
\end{equation}
which is simply time cross $\s^{n+k-1}$. Noting that the conformal boundary $\partial M$ corresponds to the limit $\rho\to\pi/2$, it is now evident how the induced metric on $\sk$ degenerates in the strict asymptotic limit. In fact, this is no different from the way in which the metric degenerates at the origin $\rho=0$ in these coordinates. More explicitly, letting $\rho = \frac{\pi}{2} - \epsilon$ with $0<\epsilon\ll1$ and expanding locally in a neighborhood of $\partial M$, one finds
\begin{equation}\label{eq:gepsilon}
\bar{g} = - d\tau^2 +(1-\epsilon^2/2) \, d\Omega_{n-2}^2 + d\epsilon^2 + \epsilon^2 \, d\Omega_{k}^2 + \mathcal{O}\(\epsilon^4\).
\end{equation}
The $(\epsilon,\Omega_k)$ sector above provides a convenient chart on the space orthogonal to $\partial M$. The $\ads$ sector of \eqref{eq:barg} takes the familiar form of one half of the Einstein static universe, and the metric induced on $\partial M$,
\begin{equation}\label{eq:coordbdy}
\bar{g}_{\partial M} = - d\tau^2 + d\Omega_{n-2}^2,
\end{equation}
reveals the usual boundary topology $\R\times\s^{n-2}$ of conformally compactified AdS spacetimes.

The leading behavior of null geodesics in $M$ near the conformal boundary can be extracted from $g$ in \eqref{eq:bargfirst} in the limit $\rho\to\pi/2$. Since null geodesics are conformally invariant, we can actually work with \eqref{eq:barg}. Let $\gamma$ be a null geodesic curve with affine parameter $\lambda$ and tangent vector field $N = \dot{\gamma}$. The Killing symmetries of \eqref{eq:barg} give rise to several conserved quantities along $\gamma$. If we choose coordinates on the spheres so that the geodesic is moving in the $\varphi$ direction on $\s^{n-2}$ and $\psi$ direction on $\s^k$, then we get the following conserved charges:\footnote{These are only conserved charges in global $\adss$, and will not actually be conserved along $\gamma$ on $M$ in general. More appropriately, these quantities should be thought of as the asymptotic charges carried by $\gamma$ as it reaches $\partial M$.}
\begin{equation}
E = \dot{\tau},\qquad L_{n-2} = \sin^2\rho\,\dot{\varphi} \qquad\text{and}\qquad L_{k} = \cos^2\rho\,\dot{\psi},
\end{equation}
One can fix an arbitrary overall factor in $N$ by setting $E=\pm1$, where the sign determines the time orientation. The general asymptotic form of $N$ can thus be written
\begin{equation}
N^a = \pm (\partial_\tau )^a + \dot{\rho} \, (\partial_\rho)^a + \frac{L_{n-2}}{\sin^2\rho} ( \partial_{\varphi})^a + \frac{L_{k}}{\cos^2\rho} ( \partial_{\psi})^a,
\end{equation}
where the null condition $N^2=0$ constrains $\rho$ to obey
\begin{equation}\label{eq:dotrhosqd}
\dot{\rho}^2 = 1 - \frac{L_{n-2}^2}{\sin^2\rho} - \frac{L_{k}^2}{\cos^2\rho}.
\end{equation}
The limit $\rho\to\pi/2$ in \eqref{eq:dotrhosqd} makes it immediately clear that $\gamma$ can only reach $\partial M$ if $L_k=0$. This means that null geodesics only reach the conformal boundary if they approach a fixed point on $\s^k$ at infinity. This is easily understood from the perspective of Kaluza-Klein reduction, where a non-zero $L_k$ would physically correspond to a massive test particle on the dimensionally-reduced spacetime, which of course cannot reach the conformal boundary.

Expanding about $\partial M$ as in \eqref{eq:gepsilon}, the asymptotic form of $N$ becomes
\begin{equation}\label{eq:asymN}
N^a = \pm (\partial_\tau)^a - \sqrt{1-L_{n-2}^2} \, (\partial_{\epsilon(\Omega_{k})})^a + L_{n-2} \( \partial_{\varphi}\)^a +\mathcal{O}\(\epsilon^2\),
\end{equation}
where the notation $\partial_{\epsilon(\Omega_{k})}$ is introduced to make it explicit that the direction of the radial vector $\partial_{\epsilon}$ on the $(\epsilon,\Omega_k)$ sector is parameterized by the angular coordinates $\Omega_k$ on the asymptotic $\sk$, like in ordinary spherical coordinates. The corresponding parametric form of its asymptotic integral curve is thus, to leading order,

\begin{equation}\label{eq:paracurve}
{\gamma(\epsilon) = \( \tau^\infty \mp \epsilon ,\, \frac{\pi}{2} - \sqrt{1-L_{n-2}^2} \, \epsilon,\, \varphi^\infty - L_{n-2} \,\, \epsilon ,\, \Omega_{k}^\infty \) + \mathcal{O}\(\epsilon^2\),}
\end{equation}
where coordinates with superscripts $\infty$ denote asymptotic values and $\partial M$ is reached at $\epsilon=0$. Note that the limiting $\Omega_k^\infty$ will always be well defined despite the fact that the spherical coordinate system $(\epsilon,\Omega_k)$ degenerates at its origin $\epsilon=0$. In particular, $L_k=0$ implies that $\partial_{\epsilon(\Omega_{k})} = \partial_{\epsilon(\Omega_{k}^\infty)}$ in \eqref{eq:asymN}.

\begin{figure}
    \centering
    \includegraphics[width=.7\textwidth]{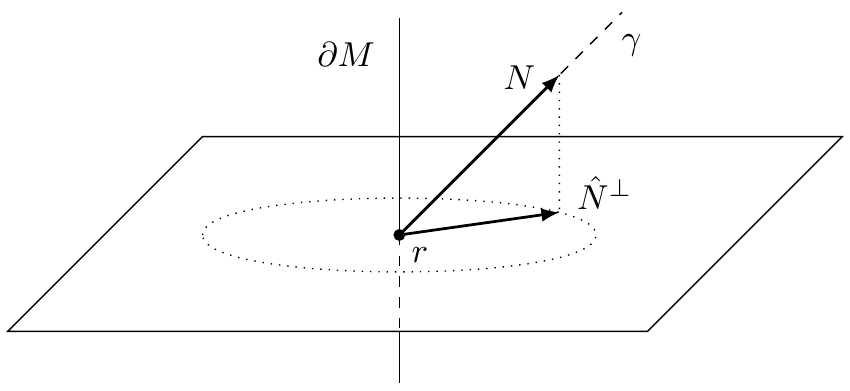}
    \caption{{Illustration of the tangent space normal to the conformal boundary at some $r\in\partial M$. The vertical line represents the conformal boundary $\partial M$, and the normal plane corresponds to the radial and $\s^k$ bulk dimensions. A null geodesic $\gamma$ reaches $\partial M$ with future-directed tangent vector $N$. The unit-norm orthogonal projection of this vector $\hat{N}^\perp$ gives a point on $\sk$ which corresponds to the asymptotic location of $\gamma$ on the compact space. }}
    \label{fig:bnb}
\end{figure}

\subsection{Definition of extended light-cone cuts}\label{ssec:geocuts}

In order to recover the higher-dimensional bulk conformal metric, we will need the point on $\sk$ associated with the null geodesic going from $p$ to $C(p)$. Since $\sk$ shrinks to zero size on the boundary, we will use the $\epsilon\to0$ limit of $\partial_{\epsilon(\Omega_k^\infty)}$. The latter can be characterized geometrically as the unit vector $\hat{N}^\perp\in\R^{k+1}$ along the projection of the null tangent vector $N$ orthogonal to $\partial M$. As a unit vector in a $(k+1)$-dimensional vector space, one can identify $\hat{N}^\perp$ with a point on $\sk$, the coordinates of which are $\Omega_k^\infty$ (see Fig. \ref{fig:bnb}).

 Unfortunately, more than one null geodesic may connect $p$ to $C(p)$, so the assignment of a point on $\sk$ may not be unique.
{This motivates the following definition:}

\begin{ndefi}[Regular light-cone cut point]\label{def:regular}
	A regular light-cone cut point $r\in C(p)$ for some bulk point $p\in M$, is a cut point such that there exists a unique null geodesic from $p$ to $r$.
\end{ndefi}

It is tempting to think of a point $r$ that fails to be regular as belonging to some caustic on the light cone, as is the case in spacetimes without internal spaces. While this will commonly be true here too, one should bear in mind that the null geodesics that connect $p$ and $r$ might actually stay at finite proper distance apart on $\sk$, only coinciding strictly at the conformal boundary. If this happened to be the case for all null geodesics connecting $p$ to $r$, these points would not be conjugate points, and thus it would not be correct to think of $r$ as arising from some bulk caustic. To account for this subtlety, it will be useful to dispense with the notion of caustics and use only what happens to be relevant from the boundary perspective in identifying whether a cut point is regular. Two null vectors $N_1$ and $N_2$ at $r$ clearly define inequivalent null geodesics if and only if one is not a rescaling of the other. Hence the failure of a light-cone cut point $r\in C(p)$ to be regular can be characterized by the existence of at least two null geodesics $\gamma_1$ and $\gamma_2$ from $p$ to $r$ with respective tangent vector fields $N_1$ and $N_2$ satisfying $(N_1\cdot N_2)_r\neq0$. It will thus be intuitive to refer to a non-regular cut point as a \emph{cusp point}.

Let $G(p) \subseteq C(p)$ be the subset of regular points in the light-cone cut of $p\in M$. On this subset, there exists a well-defined map $\Phi : G(p) \to \sk$ associating a point on the unit $k$-sphere to every regular point. Explicitly, as remarked above, this map may be written
\begin{equation}\label{eq:skmap}
    \Phi(r) = \hat{N}_r^\perp,
\end{equation}
where an isomorphism between the unit $\sk$ and the space of $(k+1)$-dimensional unit vectors is implied (see Fig. \ref{fig:elcc}). In contrast, there is no guarantee that an analogous map on the set of cusp points $E(p) \equiv C(p) \smallsetminus G(p)$ would be well-defined due to potential multi-valuedness on $\sk$.

\begin{ndefi}[Extended light-cone cut]\label{defi:extcut}
	The extended future/past light-cone cut $\mathcal{C}^\pm(p)$ of a point $p\in M$ is defined {on the set of regular points $G^\pm(p)\subseteq C^\pm(p)$} as $$ \mathcal{C}^\pm(p) = \bigcup_{r\in{G}^\pm(p)} \(r,\Phi(r)\).$$
\end{ndefi}
\noindent These extended cuts $\mathcal{C}(p)$ may be thought of as a generalization of the standard cuts $C(p)$ where every suitable point, namely every $r\in G(p)$, is further endowed with the point on $\sk$ at which the null geodesic from $p$ to $r$ ends up (see Fig. \ref{fig:elcc}).

\begin{figure}
    \centering
    \includegraphics[width=.9\textwidth]{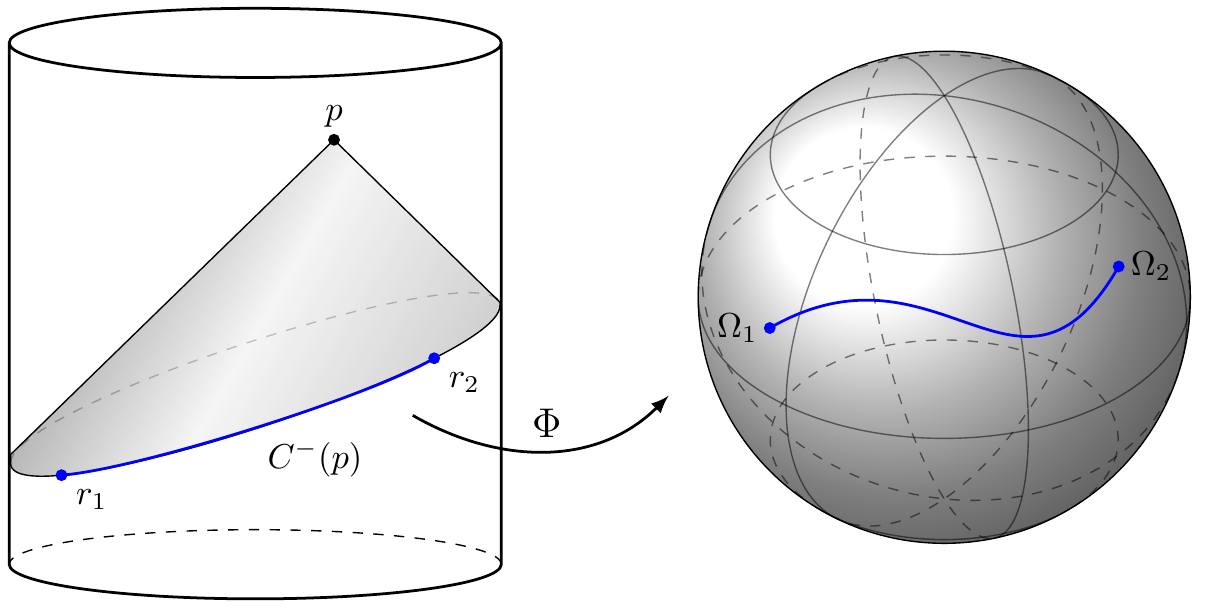}
    \caption{{The map $\Phi$ defined in \eqref{eq:skmap} takes (regular) points in the cut $C(p)$ of a point $p\in M$ and maps them to $\sk$. For instance, the blue segment between points {$r_1,r_2\in C^-(p)$} maps to the blue segment between points $\Omega_1,\Omega_2\in\sk$.}}
    \label{fig:elcc}
\end{figure}

Since the reconstruction strategy relies on the existence of regular points on which the map \eqref{eq:skmap} is defined, it is important to check whether $G(p)$ contains sufficiently many points at one's disposal. An important step in this direction is accomplished by the following proposition, which as proven in Appendix \ref{sec:maths} and applies to light-cone cuts in spacetimes with degenerate boundaries:
\begin{restatable}[]{proposition}{propzeromes}
\label{prop:zeromes}
	Every light-cone cut $C(p)$ is differentiable everywhere except on a set of measure zero.
\end{restatable}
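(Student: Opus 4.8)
The plan is to exhibit $C(p)$, near each of its points, as the graph of a Lipschitz function over a ball in the round $\mathbb{S}^{n-2}$, and then to conclude from Rademacher's theorem that such a function is differentiable almost everywhere; since differentiability and the property of being null are local notions and $C(p)$ is covered by countably many such graph patches, this yields the global statement. The conceptual point that makes this work despite the boundary being degenerate is that the conformally rescaled, compactified spacetime $\bar M$ is still a genuine smooth ($C^2$) Lorentzian manifold in a neighborhood of $\partial M$: in the chart of \eqref{eq:gepsilon} the block of dimensions normal to $\partial M$ reads $d\epsilon^2 + \epsilon^2\, d\Omega_k^2 + \mathcal{O}(\epsilon^4)$, i.e.\ $\mathbb{R}^{k+1}$ in polar coordinates, which extends smoothly through the ``origin'' $\epsilon = 0$; thus $\partial M$ is the smooth codimension-$(k+1)$ submanifold $\{\vec x = 0\}$ of $\bar M$, with $\vec x \in \mathbb{R}^{k+1}$ Cartesian coordinates on the normal directions, and $C(p) = \partial J^\pm(p)\cap\partial M = \partial J^\pm(p)\cap\{\vec x=0\}$.

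With this in hand I would argue as follows. By standard Lorentzian causal theory, an achronal boundary is a closed, embedded, achronal $C^0$-hypersurface; applied in $\bar M$ (using AdS-hyperbolicity to ensure no new chronological relations are created at $\partial M$) this shows $\partial J^\pm(p)$ is, in any coordinate patch with a timelike coordinate $\tau$, the graph $\tau = H(\Omega_{n-2},\vec x)$ of a function Lipschitz in $(\Omega_{n-2},\vec x)$, with constant controlled by the locally bounded opening of the light cones of $\bar g$. Restricting to $\partial M$, the function $F(\Omega_{n-2})\equiv H(\Omega_{n-2},0)$ is still Lipschitz, being $H$ precomposed with the inclusion of the linear subspace $\{\vec x = 0\}$; its graph is the portion of $C(p)$ in that patch. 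Rademacher's theorem makes $F$ differentiable a.e.\ with respect to the round measure on $\mathbb{S}^{n-2}$, and since the graph map $\Omega_{n-2}\mapsto(F(\Omega_{n-2}),\Omega_{n-2})$ is bi-Lipschitz onto its image in $C(p)$, the set of non-differentiable (``cusp'') points of $C(p)$ is null for the induced $(n-2)$-dimensional measure. Covering $C(p)$ by countably many such patches completes the argument.

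I expect the main obstacle to be the first input: establishing that $\bar M$ really is a $C^2$ Lorentzian manifold up to and including $\partial M$ for a \emph{generic} spacetime that is only asymptotically locally $\adss$, rather than exactly global $\adss$. This amounts to checking that the subleading terms in the near-boundary expansion of $g$ do not spoil the removable (polar-coordinate) character of the apparent singularity at $\epsilon = 0$ worked out in Section~\ref{ssec:asymp}, i.e.\ that after collapsing the asymptotically vanishing $\mathbb{S}^k$ one still gets a smooth manifold and a $C^2$ metric; some care with the allowed class of subleading fall-offs is needed here. A secondary point to verify is that $\partial J^\pm(p)$ meets the low-dimensional locus $\partial M$ ``cleanly'', without becoming tangent to it or degenerating there — this is where one uses that the generators of $\partial J^\pm(p)$ are null geodesics and, by \eqref{eq:dotrhosqd}, precisely those with $L_k = 0$ reach $\partial M$; this is what drops the bulk codimension of $C(p)$ from $2$ to $k+2$, but it leaves intact the graph-over-$\mathbb{S}^{n-2}$ description and hence the conclusion. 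As a cross-check, one could instead run the original arguments of \cite{Engelhardt2016,Engelhardt2017} on the regulated cuts $C_\epsilon(p) = \partial J^\pm(p)\cap\partial M_\epsilon$, which are ordinary achronal slices for $\epsilon>0$ with Lipschitz constant uniform in $\epsilon$ (the cones of \eqref{eq:barg} being uniformly non-degenerate near $\rho=\pi/2$), and then pass to $\epsilon\to0$; the subtlety there is that the $\mathbb{S}^k$ directions of $C_\epsilon(p)$ collapse in the limit, so one must first restrict to its $L_k\to0$ locus.
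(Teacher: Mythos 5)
Your proposal rests on the same two pillars as the paper's proof --- represent $C(p)$ locally as the graph of a Lipschitz function over an $(n-2)$-dimensional coordinate space, then invoke Rademacher's theorem --- so the core is sound. The difference is \emph{where} the Lipschitz estimate is obtained. You work in the ambient compactified spacetime $\bar M$: you first want $\partial J^\pm(p)$ as a Lipschitz graph $\tau = H(\Omega_{n-2},\vec x)$ over all $n+k-1$ spatial directions of $\bar M$, and only then restrict to $\{\vec x=0\}$. This is exactly what forces the two obstacles you flag: (i) whether $\bar g$ really extends as a $C^2$ Lorentzian metric through $\epsilon=0$ for a spacetime that is only \emph{asymptotically locally} $\adss$ (the polar-coordinate regularity in \eqref{eq:gepsilon} is derived for the exact product metric, and subleading fall-offs would need to be controlled), and (ii) whether $\partial J^\pm(p)$ meets the codimension-$(k+1)$ locus $\partial M$ cleanly. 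The paper's proof (which follows Proposition 6.3.1 of \cite{Hawking1973}) sidesteps both by never leaving $\partial M$: it uses only that $C(p)$ is an achronal subset of the boundary, which is itself a non-degenerate $(n-1)$-dimensional conformal manifold $\R\times\s^{n-2}$ with metric \eqref{eq:coordbdy}. Achronality of $C(p)$ within $\partial M$ directly yields $\abs{\tilde x^0(\vec x(r)) - \tilde x^0(\vec x(s))} \leq K\,\abs{\vec x(r)-\vec x(s)}$ in boundary normal coordinates, giving the Lipschitz graph (and a Lipschitz atlas via the transition maps) with no reference to the degenerate $\s^k$ directions; Rademacher then finishes. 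So your argument is not wrong, but the difficulties you correctly identify are artifacts of routing the Lipschitz estimate through $\bar M$; the intrinsic boundary argument renders them moot, and if you keep your route you should note that achronality of $C(p)$ in $\partial M$ is all that is actually needed --- the full hypersurface structure of $\partial J^\pm(p)$ near $\partial M$ is more than the statement requires.
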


A cut $C(p)$ can be non-differentiable at $r$ only if there is more than one null geodesic from $p$ to $r$. So any point $r$ at which $C(p)$ fails to be differentiable will be a cusp point $r\in E(p)$, and thus the set of all non-differentiable cusp points is of measure zero in $C(p)$.

Since there may be cusp points where $C(p)$ is differentiable, this is not enough to conclude anything about the measure of $E(p)\subseteq C(p)$. However, {differentiability at cusp points is only possible if all geodesics from $p$ to $r$ happen to have tangent vectors at $r$ with the} same normalized projection onto $\partial M$. Fortunately, given one vector, {such a} condition on the second is satisfied only by a set of measure zero and thus the set of all differentiable cusp points is expected to be of measure zero in $E(p)$.

Putting together the conclusions of the last two paragraphs, one expects that the union of all differentiable and non-differentiable cusp points, which is nothing but the set of all cusp points $E(p)$, is of Lebesgue measure zero as a subset of $C(p)$. This implies that its complement, i.e. the set of all regular points $G(p)$, is of full measure, everywhere dense and that its closure $\bar{G}(p)=C(p)$.

The key property of the extended cut that we will use is the following: 
\begin{restatable}[]{proposition}{propnormalgeo}
\label{prop:normalgeo}
    Each point $(r,\Phi(r))$ on the extended cut $\mathcal{C}(p)$ determines the unique null geodesic from $r$ to $p$.
\end{restatable}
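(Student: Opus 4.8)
The plan is to reconstruct the null geodesic from $r$ to $p$ as the unique one whose tangent vector at $r$ has the correct asymptotic data, both the direction along $\partial M$ that is already encoded in the standard cut $C(p)$ and the direction transverse to $\partial M$ that is the new content of the extended cut. First I would recall that the standard light-cone cut $C(p)$ by itself determines, at a regular point $r$, the tangent direction of the null geodesic within the boundary directions: indeed $r\in C(p)$ lies on $\partial J^\pm(p)$, and the family of cuts $C(q)$ tangent to $C(p)$ at $r$ traces out (property \ref{null} applied to regulated cuts, or directly in the spacetime) a null curve through $p$, so the component of $N$ along the boundary coordinates $(\tau,\Omega_{n-2})$ is fixed up to overall scale. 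Concretely, from the asymptotic expansion \eqref{eq:paracurve}, the boundary trajectory of $\gamma$ near $r$ is $(\tau^\infty\mp\epsilon,\ \varphi^\infty - L_{n-2}\,\epsilon)$, so knowing how $C(p)$ sits in $\partial M$ fixes $E=\pm1$ (the sign being the time orientation of the cut, past versus future) and $L_{n-2}$, hence the magnitude $\sqrt{1-L_{n-2}^2}$ of the radial descent.

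Next I would use the extended-cut datum $\Phi(r)=\hat N_r^\perp\in\sk$ to fix the remaining freedom. By \eqref{eq:asymN} the orthogonal projection of $N$ onto the normal plane is $-\sqrt{1-L_{n-2}^2}\,(\partial_{\epsilon(\Omega_k)})^a+\mathcal{O}(\epsilon^2)$, whose normalization is the point $\Omega_k^\infty\in\sk$; since regularity of $r$ means there is a \emph{unique} null geodesic from $p$ to $r$, the assignment $\Phi(r)$ is single-valued and picks out precisely that geodesic's transverse launching direction. Combining the two pieces, the full asymptotic tangent $N^a$ at $r$ is determined up to the overall scale that is irrelevant for geodesics (affine reparametrization) and up to the binary time-orientation choice that is fixed by whether we are dealing with $\mathcal{C}^+$ or $\mathcal{C}^-$. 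A null geodesic in $M$ is uniquely determined by a point and a tangent direction there (the geodesic equation has unique solutions given initial data, and null-ness is preserved), so specifying $r\in\partial M$ together with the full null direction $N$ singles out one geodesic; it remains to argue this geodesic actually reaches $p$.

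For that last point I would argue as follows: by construction $r\in G(p)$ came with an actual null geodesic $\gamma$ from $p$ to $r$, and its tangent vector at $r$ has exactly the asymptotic data $(E,L_{n-2},\Omega_k^\infty)$ that we have just reconstructed from $\mathcal{C}(p)$; since a geodesic is determined by initial data at $r$ and we have recovered that data, the reconstructed geodesic \emph{is} $\gamma$, which by hypothesis passes through $p$. Uniqueness is then immediate from the definition of regular point: there is no other null geodesic from $p$ to $r$, so the geodesic we extract is unambiguously the one connecting $r$ to $p$. I expect the main subtlety — and the place where care is needed rather than a genuine obstacle — to be the claim that $C(p)$ alone determines the boundary components of $N$ at a regular $r$ and not merely the null cone's intersection with $\partial M$: one must make sure the tangent direction of $C(p)$ inside $\partial M$ at $r$ really coincides with the projection of $N$ onto the boundary directions, which follows from \eqref{eq:paracurve} because the leading boundary motion of $\gamma$ is along $(\partial_\tau \mp, -L_{n-2}\partial_\varphi)$ and $C(p)$ is the locus swept by the endpoints of such geodesics; a secondary point to address is that the degenerate $(\epsilon,\Omega_k)$ chart still yields a well-defined limiting point $\Omega_k^\infty$, which is exactly the content of the remark following \eqref{eq:paracurve} and of the isomorphism $\hat N^\perp\leftrightarrow\sk$ used in \eqref{eq:skmap}.
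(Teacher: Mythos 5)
Your overall architecture matches the paper's: the boundary components of the tangent vector are to be read off from how $C(p)$ sits in $\partial M$ at the regular point $r$, the transverse $\sk$ component is supplied by $\Phi(r)$, the time orientation by the choice of past versus future cut, and then uniqueness of geodesics given initial position and direction (plus regularity of $r$) finishes the argument. However, the step you yourself flag as ``the main subtlety'' is a genuine gap, and the justification you offer for it does not work. Knowing the cut $C(p)$ as a subset of $\partial M$ near $r$ does \emph{not} directly hand you $E$ and $L_{n-2}$ via \eqref{eq:paracurve}: that expansion describes the trajectory of a geodesic whose charges are already given, in coordinates adapted to that geodesic, and says nothing about how the arrival direction relates to the local shape of the cut. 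Your alternative route --- invoking the tangency property (property \ref{null} or its extended analogue) to produce a null curve through $p$ --- is circular here, since in the paper that property is Theorem 1, whose proof rests on Proposition \ref{prop:normalgeo}. Note also that your phrasing is off: the boundary projection of $N$ is timelike (it contains $\pm\partial_\tau$) while $T_r C(p)$ is spacelike, so the former cannot ``coincide with'' the tangent direction of the cut; the correct statement is that the boundary projection of $\dot\gamma$ is \emph{orthogonal} to $T_r C(p)$ within $T_r\partial M$, i.e.\ proportional to the unique unit timelike normal $T$ of the cut.

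That orthogonality is precisely the missing lemma, and it is where the paper does real work. The argument is an achronality argument: decompose $\dot{\gamma}(0)\propto T+\cos\alpha\,V^{\perp}+\sin\alpha\,S$ with $S\in T_rC(p)$ a unit spacelike vector and $V^{\perp}$ the unit normal projection; if $\alpha\neq0$, one can deform $\gamma$ infinitesimally near $\partial M$ into a timelike segment reaching a nearby cut point $r_\epsilon$ displaced along $S$, making $p$ and $r_\epsilon$ timelike related and contradicting the achronality of $\partial J(p)$. Hence $\alpha=0$ and $\dot{\gamma}(0)\propto T+V^{\perp}$. Only after this step does the counting close: the normal bundle of $C(p)$ in $\bar M$ has dimension $k+2$, the datum $\Phi(r)$ fixes the $k$ sphere directions, and of the remaining two null directions only one points into the bulk, so $\dot\gamma(0)$ is fixed up to scale. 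You should supply this orthogonality argument (or an equivalent one) rather than asserting that the cut's embedding ``fixes $E$ and $L_{n-2}$''; with it inserted, the rest of your proposal goes through.
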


\noindent This result is proven in Appendix \ref{sec:maths} and provides the connection to previous results in \cite{Engelhardt2017,Engelhardt2016}.

\subsection{Recovering the bulk conformal metric from extended cuts}\label{ssec:oldwork}

The following results apply to standard light-cone cuts $C(p)$ and their proofs are identical to those in \cite{Engelhardt2016}, so they are omitted:\footnote{These results correspond to parts $(1)$ and $(2)$ of the Proposition in \cite{Engelhardt2016}.}
\begin{nprop}
	 $C(p)$ is a complete spatial slice of $\partial M$.
\end{nprop}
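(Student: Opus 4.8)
The plan is to follow the argument of \cite{Engelhardt2016} verbatim, pausing only to check that the degeneration of the $\s^k$ directions at $\partial M$ leaves every step intact; this is why the paper can assert the proof is unchanged, but it is worth spelling out where the check happens. First I would recall the purely causal facts in the conformal compactification $\bar M$: since $p$ is in causal contact with $\partial M$ and $\bar M$ is AdS-hyperbolic (hence has no closed causal curves), $J^\pm(p)$ is a future/past set that is neither empty nor all of $\bar M$, so $\partial J^\pm(p)$ is a closed, achronal, edgeless $C^0$ hypersurface of $\bar M$. Next I would observe that $\partial M$ carries the Lorentzian metric $\bar g_{\partial M}$ of \eqref{eq:coordbdy}, and that from the local form \eqref{eq:gepsilon} — in which the normal sector is the spacelike (degenerating) $(\epsilon,\Omega_k)$ chart sitting at $\epsilon=0$ — any curve tangent to $\partial M$ that is timelike or causal for $\bar g_{\partial M}$ is timelike or causal for $\bar g$. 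Hence achronality and causal relations in $\bar M$ restrict to achronality and causal relations inside $(\partial M,\bar g_{\partial M})$, and in particular $C^\pm(p)=\partial J^\pm(p)\cap\partial M$ is achronal in $\partial M$.

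The second half of the argument shows $C^\pm(p)$ is an entire slice, not just a piece. I would set $S^\pm:=\overline{J^\pm(p)}\cap\partial M$ and note that, because causal curves within $\partial M$ are causal in $\bar M$, $S^\pm$ is a genuine future/past set of $(\partial M,\bar g_{\partial M})$ whose topological boundary computed inside $\partial M$ is exactly $C^\pm(p)$ and is edgeless there. Then I would show that every inextendible timelike curve $\lambda$ in $\partial M$ meets $C^\pm(p)$: near $\partial M$ the vector $\partial_\tau$ is timelike, so $\tau$ is strictly monotone along causal curves of $\bar M$; combined with AdS-hyperbolicity (compactness of $J^+(q)\cap J^-(q')$ in $\bar M$) this bounds $S^-$ from above and $S^+$ from below in $\tau$, while causal contact of $p$ with $\partial M\cong\R\times\s^{n-2}$ makes $S^\pm$ unbounded in the opposite $\tau$ direction. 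Thus $\lambda$, which sweeps all of $\tau$, both enters and exits $S^\pm$, so it must cross the boundary $C^\pm(p)$, and achronality forces exactly one crossing. An achronal, edgeless subset of $\R\times\s^{n-2}$ crossed exactly once by every inextendible timelike curve is a Cauchy surface, i.e. a complete spatial slice.

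\textbf{Main obstacle.} The only point not already handled by the asymptotically-AdS proof is the worry that, since the null generators of $\partial J^\pm(p)$ all reach $\partial M$ with $L_k=0$, i.e. ``radially'' in the $(\epsilon,\Omega_k)$ chart of \eqref{eq:paracurve}, the intersection $\partial J^\pm(p)\cap\partial M$ might degenerate the way $\partial M$ itself degenerates in $\bar M$ and fail to be codimension one in $\partial M$ (or acquire an edge). The resolution — which I would make explicit — is that this degeneration lives entirely in the directions normal to $\partial M$, precisely the $k+1$ directions that collapse, so within $\partial M$ the cut is still a codimension-one hypersurface; concretely, for every regulator $\epsilon>0$ the regulated cut $C_\epsilon(p)=\partial J(p)\cap\partial M_\epsilon$ is an honest codimension-$2$ surface of $\bar M$ as in Section \ref{sec:exlcc}, and only its projection to $\partial M$ is at issue. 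I expect the genuinely delicate step to be establishing that $S^\pm$ is edgeless in $\partial M$ — i.e. that the cut does not ``run off to infinity'' along $\partial M$ — which relies on AdS-hyperbolicity in exactly the same way as in \cite{Engelhardt2016}.
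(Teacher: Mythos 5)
Your proposal is correct and takes essentially the same route as the paper, which simply defers to \cite{Engelhardt2016} on the grounds that the argument carries over unchanged; your explicit check that the degeneration of the $\s^k$ is confined to the directions normal to $\partial M$ — so that $C^\pm(p)$ remains an achronal, edgeless, codimension-one subset of $(\partial M,\bar g_{\partial M})$ crossed exactly once by every inextendible timelike boundary curve — is precisely the verification the paper implicitly relies on when it omits the proof. Nothing further is needed.
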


\begin{nprop}\label{prop:causalbdy}
	For any $p\in J^\pm[\partial M]$, there exists precisely one past/future cut $C^\mp(p)$.
\end{nprop}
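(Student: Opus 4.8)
The plan is to follow the argument of \cite{Engelhardt2016} essentially verbatim: the statement involves only the standard cut $C^\mp(p)=\partial J^\mp(p)\cap\partial M$, and the sole structural inputs are AdS-hyperbolicity and the fact that causal relations between bulk and boundary points make sense in the conformal compactification $(\bar M,\bar g)$, neither of which is sensitive to the $\mathbb{S}^k$ degenerating. I would work throughout in $\bar M$, where $J^\mp(p)$ is closed for every $p$. By time-reversal it suffices to treat a bulk point $p\in J^+[\partial M]$ and its past cut. Given the preceding proposition --- that any nonempty $C(p)$ is a single complete spatial slice of $\partial M$ --- and the fact that a point of $C^-(p)$ is by definition a boundary point causally to the past of $p$, the content of ``precisely one past cut'' reduces to the assertion that $C^-(p)\neq\varnothing$ exactly when $p\in J^+[\partial M]$; the ``only if'' direction is immediate, so the work is in showing $C^-(p)\neq\varnothing$.

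First I would use $p\in J^+[\partial M]$ to produce $s\in\partial M$ with a future-directed causal curve to $p$; then $I^-(s)\cap\partial M$ is a nonempty open subset of $\partial M\cong\mathbb{R}\times\mathbb{S}^{n-2}$ contained in $I^-(p)$, so I may fix $a\in I^-(p)\cap\partial M$. Run the future-directed timelike curve $\alpha$ inside $\partial M$ from $a$ along which the conformal time satisfies $\tau\to+\infty$ (available since the boundary metric is $-d\tau^2+d\Omega_{n-2}^2$). It starts inside $J^-(p)$, and --- the key claim --- it eventually leaves $J^-(p)$. Granting this, the set $\{\,t : \alpha(t)\in J^-(p)\,\}$ is closed (since $J^-(p)$ is) and bounded above, so it has a maximum $t_0$; the point $\alpha(t_0)$ then lies in $J^-(p)$ but is a limit of boundary points outside $J^-(p)$, hence $\alpha(t_0)\in\partial J^-(p)\cap\partial M=C^-(p)$, so the past cut is nonempty. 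The future-cut statement is the time reverse.

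The genuinely delicate step --- the one I expect to carry the proof --- is the claim that $\alpha$ escapes $J^-(p)$, equivalently that $J^-(p)\cap\partial M$ is a \emph{proper} subset of $\partial M$: a bulk point cannot causally influence the entire future conformal boundary. I would establish it by picking an auxiliary bulk point $x\in I^-(p)$ close enough to $\partial M$ that $x$ also emits a future-directed null geodesic reaching the boundary (such $x$ exists because $I^-(p)$ meets $\partial M$ in an open set); if $\alpha$ stayed in $J^-(p)$ forever, it would yield boundary points $b_i\in J^-(p)$ with $\tau(b_i)\to+\infty$ which can be arranged to also lie in $J^+(x)$, so $b_i\in J^+(x)\cap J^-(p)$ --- compact in $\bar M$ by AdS-hyperbolicity --- whereas $\bar M$ is non-compact in $\tau$, a contradiction. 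The single thing to be careful about here is that the relevant compactness is that of causal diamonds in the \emph{conformally compactified} $\bar M$, which is precisely what AdS-hyperbolicity provides; since this is insensitive to the shrinking $\mathbb{S}^k$, the proof of \cite{Engelhardt2016} goes through unchanged. The remaining items --- existence of $x$, that $\alpha(t_0)$ lands on the topological boundary of $J^-(p)$ in $\bar M$, and the passage from ``nonempty'' to ``a single complete slice'' via the preceding proposition --- are routine causal-structure bookkeeping.
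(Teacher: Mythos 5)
Your proposal is correct and takes the same route as the paper, which omits the proof precisely because it is identical to that of part (2) of the Proposition in \cite{Engelhardt2016}: reduce to nonemptiness of $C^\mp(p)$, exhibit a future-directed timelike boundary curve starting inside $J^\mp(p)$ that must exit it because AdS-hyperbolicity forbids $J^+(x)\cap J^-(p)$ from containing boundary points of unbounded conformal time, and take the boundary-exit point as a cut point. Your observation that none of these inputs is sensitive to the degenerating $\s^k$ is exactly the justification the paper relies on for carrying this result over unchanged.
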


The following results, in contrast, are generalizations of results in \cite{Engelhardt2016} which now apply to extended light-cone cuts (see Appendix \ref{sec:maths} for proofs):\footnote{These results are analogous to (a stronger version of) part $(3)$ of the Proposition and Theorem $1$ in \cite{Engelhardt2016}.}

\begin{restatable}[]{proposition}{propmorethanonepoint}
\label{prop:morethanonepoint}
	$\mathcal{C}(p)\cap \mathcal{C}(q)$ contains more than one point if and only if $p=q$.
\end{restatable}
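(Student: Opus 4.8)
The plan is to prove the two implications separately. The ``if'' direction is immediate: if $p=q$ then $\mathcal{C}(p)=\mathcal{C}(q)$, and since $G(p)$ has full measure in the complete spatial slice $C(p)$ (as noted after Proposition \ref{prop:zeromes}), which has positive dimension for $n>2$, the extended cut contains infinitely many points, in particular more than one. All the work is in the converse, which I would phrase contrapositively: assuming $p\neq q$, I want to show $\mathcal{C}(p)\cap\mathcal{C}(q)$ contains at most one point.

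So I would suppose, for contradiction, that $(r_1,\omega_1)$ and $(r_2,\omega_2)$ are two distinct points of $\mathcal{C}(p)\cap\mathcal{C}(q)$. Since by Definition \ref{defi:extcut} each extended cut is the graph of the single-valued map $\Phi$ over a subset of $\partial M$, two distinct points of it have distinct boundary components, so $r_1\neq r_2$; moreover each $r_i$ is a regular cut point of both $p$ and $q$, with $\Phi_p(r_i)=\omega_i=\Phi_q(r_i)$. The key step is to apply Proposition \ref{prop:normalgeo}: the datum $(r_i,\omega_i)$, viewed as a point of $\mathcal{C}(p)$, determines the unique null geodesic from $r_i$ to $p$, while the very same datum, viewed as a point of $\mathcal{C}(q)$, determines the unique null geodesic from $r_i$ to $q$; hence these two geodesics coincide, yielding a single null geodesic $\gamma_i$ that passes through both $p$ and $q$. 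Consequently $p$ and $q$ lie on a common null geodesic, and $\gamma_1\neq\gamma_2$ because they reach $\partial M$ at the distinct points $r_1\neq r_2$.

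I would then finish with the causal structure, as at the end of the corresponding argument in \cite{Engelhardt2016}. Taking past cuts without loss of generality, $r_i\in C^-(p)\cap C^-(q)$, and, relabelling $p$ and $q$ if necessary, we may assume $q\in J^-(p)$; then along each $\gamma_i$ the causal order is $p$, $q$, $r_i$, with $q$ strictly between $p$ and $r_i$ (since $r_i\in\partial M$ is the past endpoint of $\gamma_i$). Because $r_i$ is a \emph{regular} point of $C^-(p)$ it is joined to $p$ by a unique null geodesic, so it is not a caustic point of $\partial J^-(p)$; and if $\gamma_i$ met any caustic of $\partial J^-(p)$ strictly before $r_i$, its continuation beyond that caustic would enter $I^-(p)$ and $r_i$ would fail to be on the cut. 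Hence $\gamma_i|_{[p,r_i]}$ is a caustic-free generator segment of $\partial J^-(p)$, and in particular $q$ is a non-caustic point through which exactly one generator of $\partial J^-(p)$ passes. But $\gamma_1|_{[p,q]}$ and $\gamma_2|_{[p,q]}$ are both such generators, and they are distinct (their tangents at $p$ differ, since $\gamma_1\neq\gamma_2$ are distinct geodesics through $p$). The only resolution is $q=p$, contradicting the hypothesis; thus $\mathcal{C}(p)\cap\mathcal{C}(q)$ contains at most one point whenever $p\neq q$.

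I expect the essential difficulty to sit entirely in the second paragraph, namely in justifying — through Proposition \ref{prop:normalgeo} — that a \emph{single} shared point of the two extended cuts already forces a common connecting null geodesic. In the boundary-nondegenerate setting of \cite{Engelhardt2016} this is false for one shared cut point: there one needs an entire open set of coincident cut points so that $C(p)$ and $C(q)$ agree to first order and the two normal null geodesics are pinned equal, and the sharpening to ``more than one point'' is precisely what the extra datum $\Phi$ buys — the asymptotic location on $\sk$ supplements the boundary data enough to recover the connecting geodesic from the single point, via the regulated-boundary asymptotics of Section \ref{ssec:asymp} with $\epsilon\to0$. This is also where a genericity assumption is needed: for spacetimes with symmetries acting asymptotically only on the internal space — such as exact $\adss$, where $\Phi_p$ is constant on $G(p)$ — the proposition actually fails, as one already sees for $\mathrm{AdS}_3\times\sk$ by taking $p$ and $q$ at a common point of $\sk$ whose $\mathrm{AdS}_3$ light-cone cuts cross transversally at two points. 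So the main task in a full write-up is to establish Proposition \ref{prop:normalgeo} robustly in the presence of the degenerate boundary; everything downstream is the routine causal bookkeeping above.
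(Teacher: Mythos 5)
Your argument is, step for step, the paper's own proof: the forward direction is immediate, and for the converse you apply Proposition \ref{prop:normalgeo} at each shared point to produce a null geodesic through both $p$ and $q$, then note that two distinct such generators of $\partial J(p)$ re-intersecting at $q\neq p$ before reaching the cut is incompatible with the achronal structure of $\partial J(p)$ (the paper phrases this via the absence of conjugate points between a bulk point and its cut; your ``crossing generators leave the boundary'' version is the same fact, and your extra causal bookkeeping is harmless).

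The substantive issue is the one you flag yourself in your final paragraph, and you are right to worry: the step ``the same datum $(r_i,\omega_i)$ determines the same geodesic whether read off $\mathcal{C}(p)$ or $\mathcal{C}(q)$'' does not follow from Proposition \ref{prop:normalgeo} as proved. That proposition fixes $\dot\gamma(0)\propto T+V^\perp$, where $V^\perp$ is pinned by $\Phi(r)$ but $T$ is the unit timelike normal to $C(p)$ \emph{inside} $T_r\partial M$ --- data of the cut's tangent space at $r$, not of the pair $(r,\Phi(r))$ alone. At a transversal crossing of $C(p)$ and $C(q)$ these normals differ, so the two ``determined'' geodesics need not coincide even when $\Phi_p(r)=\Phi_q(r)$, and no common generator through $p$ and $q$ is produced. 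Your exact-$\adss$ scenario realizes this: two equal-time, spacelike-separated points at the same position on $\sk$ have base cuts crossing transversally at (at least) two regular points, both of which survive into $\mathcal{C}(p)\cap\mathcal{C}(q)$ because $\Phi$ is constant there. The paper's own proof (and that of Theorem \ref{thm:stronger}) makes the same leap, so your write-up is faithful to it; but a complete argument needs either a tangency hypothesis at the shared points or a genericity assumption ensuring that transversal crossings of the base cuts never carry coincident $\Phi$-values. One correction to your commentary: in the non-degenerate setting of \cite{Engelhardt2016} a single regular cut point already determines the normal null geodesic (the cut is bulk-codimension two there), so no open set of coincident points is needed for that step; the ``open set'' hypothesis in their Proposition plays a different role.
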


\begin{nthm}\label{thm:exactlyonepoint}
	If $\mathcal{C}(p)\cap \mathcal{C}(q)$ contains exactly one point, then $p$ and $q$ are distinct and null-related.
\end{nthm}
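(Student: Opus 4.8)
The plan is to argue by contrapositive-style case analysis on the null geodesics associated to the single intersection point. Suppose $\mathcal{C}(p)\cap\mathcal{C}(q)=\{(r,\Phi(r))\}$, so in particular $r$ is a regular cut point for both $p$ and $q$, meaning there is a unique null geodesic $\gamma_p$ from $p$ to $r$ and a unique null geodesic $\gamma_q$ from $q$ to $r$, and moreover these two geodesics induce the \emph{same} point $\Phi(r)\in\sk$, i.e.\ the same normalized orthogonal projection $\hat N^\perp_r$ at $r$. By Proposition \ref{prop:normalgeo}, the data $(r,\Phi(r))$ determines a unique null geodesic from $r$ into the bulk; since both $\gamma_p$ and $\gamma_q$ reach $r$ with the same asymptotic $\sk$ data, they must be the \emph{same} null geodesic $\gamma$ (at least near $r$, and then globally by uniqueness of geodesic extension). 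Hence $p$ and $q$ both lie on this single null geodesic $\gamma$, so $p$ and $q$ are null-related or equal. It remains to rule out $p=q$, which is immediate from Proposition \ref{prop:morethanonepoint}: if $p=q$ then $\mathcal{C}(p)\cap\mathcal{C}(q)=\mathcal{C}(p)$ contains more than one point (indeed it is a full-measure subset of a spatial slice of $\partial M$), contradicting the hypothesis that the intersection is a single point. Therefore $p\neq q$ and they are null-separated.

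First I would set up the hypothesis carefully and extract exactly what ``$(r,\Phi(r))\in\mathcal{C}(p)\cap\mathcal{C}(q)$'' means: $r\in G(p)\cap G(q)$, and the value $\Phi(r)$ computed from the $p$-geodesic agrees with the value computed from the $q$-geodesic. Second, I would invoke Proposition \ref{prop:normalgeo} to convert the boundary data $(r,\Phi(r))$ into a unique bulk null geodesic emanating from $r$; I would note that the tangent vector $N_r$ at $r$ is pinned down up to scale by the requirement that it be null, tangent to the cut's lift, and have prescribed normalized orthogonal projection $\hat N^\perp_r=\Phi(r)$ — this is precisely the content of that proposition. Third, since regularity of $r$ for $p$ forces $\gamma_p$ to be \emph{the} null geodesic from $r$ realizing this data, and likewise for $q$, I conclude $\gamma_p=\gamma_q=:\gamma$ as unbroken null geodesics, whence $p,q\in\gamma$. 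Fourth, I would dispatch the $p=q$ case via Proposition \ref{prop:morethanonepoint} as above.

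The main obstacle I anticipate is the third step: carefully justifying that two null geodesics hitting the \emph{same} boundary point $r$ with the \emph{same} asymptotic location on $\sk$ must coincide as geodesics. Naively one wants to say ``same point, same tangent direction, therefore same geodesic,'' but the subtlety flagged in the discussion preceding Definition \ref{def:regular} is real: the coordinate chart $(\epsilon,\Omega_k)$ degenerates at $\partial M$, so ``same tangent vector at $r$'' is not literally meaningful on $\partial M$ itself — one must work at the regulated boundary $\partial M_\epsilon$ or, equivalently, use the normalized projection $\hat N^\perp$ as the invariant replacement for the $\sk$-direction of the tangent. This is exactly why the statement is phrased in terms of $\Phi(r)$ and why Proposition \ref{prop:normalgeo} is invoked as a black box rather than re-derived: it encapsulates the claim that the pair (boundary point, asymptotic $\sk$ point) is enough to reconstruct the full tangent direction and hence the geodesic. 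Provided that proposition is in hand, the argument here is short; without it, one would have to redo the asymptotic analysis of Section \ref{ssec:asymp} — in particular equations \eqref{eq:asymN} and \eqref{eq:paracurve} — to see that $L_{n-2}$, $E$, and $\Omega_k^\infty$ together fix $N^a$ up to rescaling, which is the real computational heart of the matter.

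One cross-check worth including: the two counterexamples to the non-extended statements, given earlier in Section \ref{sec:exlcc}, should \emph{not} survive here, and it is instructive to see why. In the spacelike-separated example with $p$, $q$, $\tilde q$ all in global $\adss$, the point $r\in C(p)\cap C(\tilde q)$ is hit by the $p$-geodesic at one point of $\sk$ and by the $\tilde q$-geodesic at a \emph{different} point of $\sk$ (since $q$ and $\tilde q$ differ only by a rotation of the internal sphere), so $\Phi_p(r)\neq\Phi_{\tilde q}(r)$ and $(r,\Phi_p(r))\notin\mathcal{C}(\tilde q)$. Thus $\mathcal{C}(p)\cap\mathcal{C}(\tilde q)=\varnothing$ rather than a single point, consistent with the theorem. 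I would mention this briefly at the end to make clear that the extra $\sk$-labeling is precisely what rescues the statement.
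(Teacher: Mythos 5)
Your proof is correct and follows essentially the same route as the paper's: Proposition \ref{prop:normalgeo} converts the single common point $(r,\Phi(r))$ into a unique ingoing null geodesic that must pass through both $p$ and $q$, and Proposition \ref{prop:morethanonepoint} rules out $p=q$. The paper applies the two propositions in the opposite order, but the content is identical.
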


Actually, a slightly stronger version of Theorem \ref{thm:exactlyonepoint} is proven in Appendix \ref{sec:maths}. The idea of the proof is simply that the common point on both extended cuts defines an ingoing null geodesic that must go through both $p$ and $q$, and hence they must be null related.

\begin{figure}
    \centering
    \includegraphics[width=.9\textwidth]{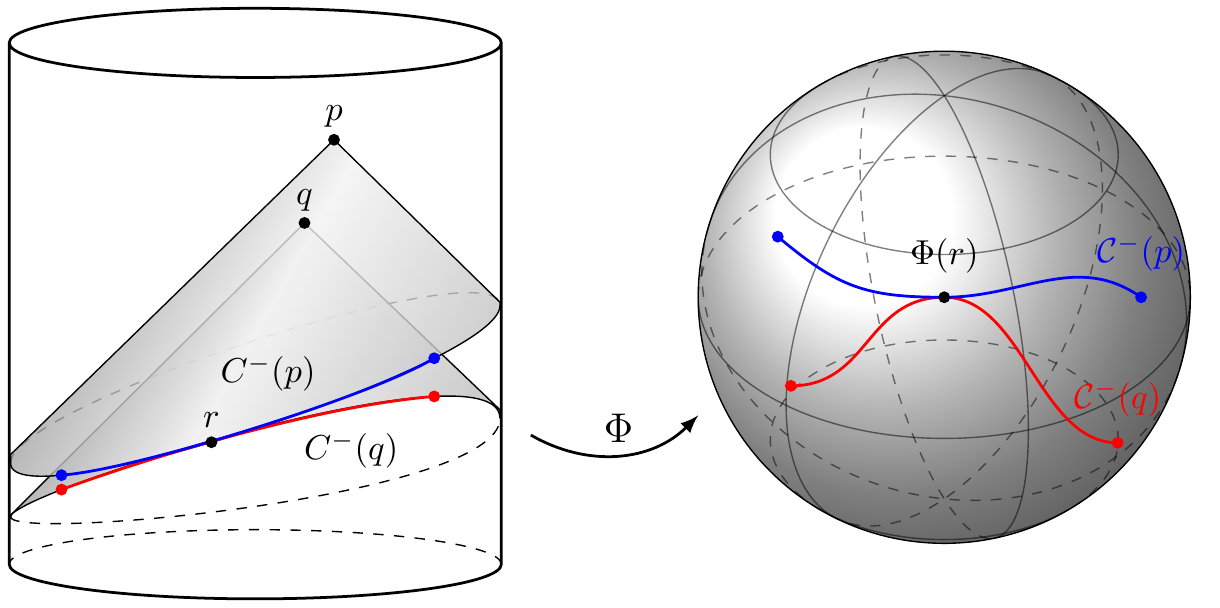}
    \caption{By Theorem \ref{thm:exactlyonepoint}, two extended cuts $\mathcal{C}^-(p)$ and $\mathcal{C}^-(q)$ intersect at precisely one point $(r,\Phi(r))$ only if they correspond to distinct, null-related points $p,q\in M$. As illustrated, this requires both that their standard cuts $C(p)$ and $C(q)$ intersect at precisely one point $r\in\partial M$, and also that their images under the map $\Phi$ intersect precisely at $\Phi(r)\in\sk$.}
    \label{fig:tec}
\end{figure}

From Proposition \ref{prop:causalbdy}, the extended past cuts provide a copy of the space $J^+[\partial M]$. From Theorem \ref{thm:exactlyonepoint}, we can determine a class of null directions at each point $p\in J^+[\partial M]$, by looking for extended cuts $\mathcal{C}^-(q)$ which intersect $\mathcal{C}^-(p)$ at precisely one point. This situation is depicted in Fig. \ref{fig:tec}. One cannot recover all null directions at $p$ but only those corresponding to null geodesics which stay on $\partial J^-(p)$. To obtain the conformal metric, one needs a basis of null directions. So the key question is whether the null directions we can reconstruct form a basis. This is not obvious since the tangent space at $p$ is $n+k$ dimensional, and we only have access to an $n-2$ dimensional space of null directions associated to points of $\mathcal{C}(p)$. {For instance, the answer would be negative in} a spacetime which is globally a product such as $\adss$, since $\mathcal{C}(p)$ would yield null geodesics that are everywhere orthogonal to $\sk$. However, for a generic spacetime without any symmetries acting only on the internal space asymptotically, one expects the $n-2$ dimensional space of null geodesics corresponding to $\mathcal{C}(p)$ to span the tangent space, and not all be orthogonal to {any} vector at $p$. Hence we expect that one can generically {construct} a basis of null vectors $\ell_i$ at $p$. One can then choose the additional null vectors $\eta_k$ and determine the conformal metric as described in Section \ref{sec:review}.


\section{Data from the Dual Field Theory}\label{sec:bdydata}

In the present context, there is no obstruction to obtaining the standard light-cone cuts from the perspective of the boundary theory via the method reviewed in Section \ref{sec:review} and originally presented in \cite{Engelhardt2017,Engelhardt2016}.\footnote{Notice, though, that the required number of correlator insertions to obtain a bulk-point singularity now needs to account for the bulk dimensions, not the boundary dimensions. In other words, one needs at least $n+k+1$ operators, not just $n+1$. See Section \ref{ssec:bpsing} for more details.} Nevertheless, as observed in Section \ref{sec:exlcc}, knowledge of the cuts $C(p)$ is not sufficient for the reconstruction of the higher-dimensional bulk metric when the latter has a degenerate conformal boundary. The additional information needed for such reconstruction to succeed is encoded in the extended light-cone cuts $\mathcal{C}(p)$ and given by the map $\Phi : G(p) \to \sk$ from regular cut points to the asymptotic $k$-sphere. The main focus of this section is to address the problem of how to obtain this extra ingredient solely from the dual field theory. We will propose a  procedure to recover this map to the sphere using only field theory correlators.

\subsection{Higher-dimensional bulk-to-boundary propagator}\label{ssec:bpsource}

From the bulk perspective, the action that describes some matter field $\varphi$ is naturally defined on all $D=n+k$ dimensions of the bulk spacetime. Accordingly, the bulk-to-bulk propagator $\mathcal{G}$ takes as input the coordinates $X$ of bulk points in some higher-dimensional chart, i.e. $X\in\R^{D}$.\footnote{In what follows, it suffices to work with retarded and advanced propagators. Under an appropriate notion of global or AdS hyperbolicity, these are well-defined and unique in general time-dependent spacetimes \cite{bar2007wave}.} In particular, if $\varphi$ obeys an equation of motion of the form $\mathds{P}_X \varphi(X) = J(X)$ for some source term $J$, then $\mathcal{G}$ is defined as the Green function of $\mathds{P}_X$,
\begin{equation}
    \mathds{P}_X \mathcal{G}(X,\tilde{X}) = \frac{1}{\sqrt{\det g}} \delta^{D}(X-\tilde{X}).
\end{equation}
Although it is a natural object, $\mathcal{G}$ rarely appears in the literature (see \cite{Dorn_2005} for an exception in global $\adss$). Instead, propagators are commonly obtained after dimensionally-reducing spacetime and Kaluza-Klein expanding on the compact dimensions. The result is an infinite family of simpler propagators associated to the infinite Kaluza-Klein tower of modes which, holographically, correspond to operators of definite conformal dimension. However, in a completely general spacetime, there is no well-defined way of discriminating the compact dimensions far from the conformal boundary. Hence, one cannot hope to learn much about the higher-dimensional spacetime geometry from the perspective of boundary correlators unless one understands how all such Kaluza-Klein mode propagators combine into the higher-dimensional propagator $\mathcal{G}$ and its bulk-to-boundary analogue $\mathcal{K}$. The goal of this section is to define and understand these higher-dimensional propagators and demonstrate how they may be used to obtain the map $\Phi$ for the construction of the extended cuts. 

Although $\mathcal{G}$ is a perfectly well-defined object, it turns out to be nontrivial to obtain an explicit, compact expression for it for a general minimally-coupled Klein-Gordon scalar field even in global $\adss$. Without simplifying assumptions, the latter can be expressed as an infinite Kaluza-Klein series expansion as in \eqref{eq:Gsolution}. However for a conformally flat choice of radii in $\adss$ and a specific mass term for the scalar corresponding to the Weyl invariant coupling to the scalar curvature, this infinite sum can be recast into the very simple form of \eqref{eq:compactbu2bu} \cite{Dorn_2005}. One example of this is the massless dilaton in $\text{AdS}_5\times \s^5$ (with equal radii), since the scalar curvature vanishes.

On the other hand, the bulk-to-boundary propagator $\mathcal{K}$ is a more subtle object. For local $\adss$ asymptotics, a canonical choice of coordinates near the conformal boundary is Fefferman-Graham $(z,x)$ on the $\ads$ part \cite{fefferman_1985} and standard hyperspherical coordinates $\Omega$ on the $\sk$ part. Accordingly, in some neighborhood of the conformal boundary one may set $X=(z,x,\Omega)$. Despite $\Omega$ being a degenerate coordinate for any point on the conformal boundary, corresponding to $z=0$, the limiting value of $\Omega$ exists along some curves as $z\to0$ (cf. the discussion at the beginning of Section \ref{ssec:geocuts}). From this standpoint, one would expect that some generalization of the extrapolate dictionary should allow one to obtain the bulk-to-boundary propagator $\mathcal{K}$ given the bulk-to-bulk propagator $\mathcal{G}$. In particular, one would hope to construct an object like $\mathcal{K}(\tilde{X} ;\, x,\Omega)$ as some limit $z\to0$ along curves of constant $(x,\Omega)$ of $\mathcal{G}(\tilde{X} ;\, z,x,\Omega)$, where $\tilde{X}$ are the coordinates of an arbitrary bulk point. 

There is a subtlety, though: because the boundary is a conformal boundary, one generally only considers the $z\to0$ limit of propagators of definite scaling dimension, for which it is clear which power of $z$ the leading term carries. Asymptotically, this scaling dimension is associated to Kaluza-Klein modes arising from the dimensional reduction of the $\sk$. But by virtue of being higher-dimensional, the propagator $\mathcal{G}$ incorporates all such modes, and therefore the extrapolation of it to $\mathcal{K}$ via the $z\to0$ limit should take care of all of them at once. Due to these complications, we shall take a more axiomatic approach in defining $\mathcal{K}$.

As a bulk-to-boundary propagator, $\mathcal{K}$ will be defined to be the kernel of $\mathds{P}_X$, i.e. the solution to the homogeneous equation
\begin{equation}\label{eq:Kdef}
    \mathds{P}_{\tilde{X}} \mathcal{K}(\tilde{X} ;\, x,\Omega) = 0,
\end{equation}
and subject to some choice of boundary conditions at $\partial M$. These conditions are imposed on the limit in which the bulk point approaches the conformal boundary too. In this limit, $\tilde{X}=(\tilde{z},\tilde{x},\tilde{\Omega})$ is again an appropriate chart and as $\tilde{z}\to0$ one may work with the intuition that $\adss$ provides. In particular, by dimensionally reducing near the conformal boundary, one can decompose $\mathcal{K}$ into contributions from lower-dimensional propagators for all possible Kaluza-Klein modes $\mathcal{K}_{\Delta}$ of definite scaling dimension $\Delta$. Thus, at least for $\tilde X$ near the boundary, we have
\begin{equation}\label{eq:Ksum}
    \mathcal{K}(\tilde{X} ;\, x,\Omega) = \sum_{L=0}^{\infty} \mathcal{K}_{\Delta_L}(\tilde{X} ;\, x,\Omega).
\end{equation}
The dependence on $\sk$ is not arbitrary, but fixed by the choice of boundary conditions. For the usual Dirichlet conditions one would demand that $\lim_{\tilde{z}\to0} \tilde{z}^{\Delta-d} \mathcal{K}_\Delta(\tilde{z},\tilde{x},\tilde{\Omega} ;\, x,\Omega) \propto \delta^d(x-\tilde{x})$, where $d=n-1$. In the higher-dimensional analogue, the $\sk$ coordinates really correspond to physical, compact dimensions, and the Dirichlet conditions should be imposed on those too. This motivates accounting for all Kaluza-Klein modes $L$ in the definition of boundary conditions via
\begin{equation}\label{eq:kernelbdy}
    \lim_{\tilde{z}\to0} \; \sum_{L=0}^{\infty} \tilde{z}^{\Delta_L-d} \mathcal{K}_{\Delta_L}(\tilde{z},\tilde{x},\tilde{\Omega} ;\, x,\Omega) = {\frac{1}{\sqrt{\det g_{\sk}}}} \delta^d(x-\tilde{x}) \delta^k(\Omega-\tilde{\Omega}).
\end{equation}

We can now continue the propagator $\mathcal{K}$ deeper inside the bulk as a kernel of $\mathds{P}_{X}$ using either retarded or advanced evolution. The result is our desired bulk-to-boundary propagator in the full spacetime. This approach is followed in {Appendix \ref{assec:bu2by}} to obtain the general form of the bulk-to-boundary propagator for the Klein-Gordon scalar field in global $\adss$, expressed as an infinite series in \eqref{eq:Kformal} (cf. the bulk-to-bulk series in \eqref{eq:Gsolution}).\footnote{Since this spacetime is static, Euclidean propagators are used in Appendix \ref{asec:props}.} In the particular case of Weyl-invariant matter, it is again possible to resum this series expansion and obtain a compact expression, namely \eqref{eq:explicitK}.

\subsection{The compact space from the dual field theory}\label{ssec:carto}

The asymptotic form of a scalar field $\varphi$ on an asymptotically locally $\adss$ spacetime admits a Kaluza-Klein expansion over the $\sk$ in scalar hyperspherical harmonics $Y_{L}^{I_L}$ of the form\footnote{For more details on how the harmonic functions $Y_L^{I_L}$ are defined see Appendix \ref{assec:bu2bu}.}
\begin{equation}
\varphi(z,x,\Omega ) = \sum_{L=0}^\infty \sum_{I_L} Y_{L}^{I_L}(\Omega) \varphi_{L}^{I_L}(z,x).
\end{equation}
According to the holographic dictionary, the leading asymptotic term of the non-normalizable branch of every mode
\begin{equation}
\lim_{z\to0} z^{\Delta_L-d} \varphi_{L}^{I_L}(z,x) = \phi_{L}^{I_L}(x),
\end{equation}
becomes a source of a local boundary operator $\mathcal{O}_{L}^{I_L}(x)$ of definite conformal dimension $\Delta_L$. Introducing a generic bulk field $\varphi$ involving arbitrarily many Kaluza-Klein modes thus corresponds to turning on arbitrarily heavy operators on the boundary theory. Explicitly, the bulk partition function  is equal to a field theory partition function involving a complicated operator sum $\mathcal{O}_\phi$ of the form
\begin{equation}\label{eq:phiop}
\mathcal{O}_\phi(x) = \sum_{L=0}^\infty \sum_{I_L} \phi_{L}^{I_L}(x) \mathcal{O}_{L}^{I_L}(x).
\end{equation}
As a boundary operator in its own right, $\mathcal{O}_\phi$ creates a bulk field with a conformal asymptotic profile $\phi(x,\Omega)$ which is given by contributions from all sources
\begin{equation}\label{eq:boundarysource}
\phi(x,\Omega) = \sum_{L=0}^\infty \sum_{I_L} Y_{L}^{I_L}(\Omega) \phi_{L}^{I_L}(x).
\end{equation}
Following this intuition and using a quantum mechanical language, at any fixed boundary coordinate $x$, the insertion of $\mathcal{O}_\phi(x)$ produces a particle which is thrown into the bulk localized at a point in $\partial M$ with coordinates $x$ and whose wavefunction is spread over the asymptotic $\sk$ according to $\phi(x,\Omega)$ as a function of $\Omega$. More explicitly, the action of the operator $\mathcal{O}_\phi(x)$ on the vacuum state $\Ket{0}$ of the boundary theory creates a state $\Ket{\phi_x} = \mathcal{O}_\phi(x) \Ket{0}$. When projected onto the position basis $\Omega$ of $\sk$, this state reads $\braket{\Omega}{\phi_x} = \phi(x,\Omega)$, whereas when projected onto the basis of eigenfunctions $Y_L^{I_L}$ of $\square_{\sk}$, it reads $\langle{Y_L^{I_L}}\mid{\phi_x}\rangle = \phi_{L}^{I_L}(x)$.

Consider the following object, a generalization of which will be relevant in the next subsection:
\begin{equation}\label{eq:bulkbdy2pt}
\Pi(\tilde{X};\,x) = \int_{\sk} d\Omega \; \phi(x,\Omega) \mathcal{K}(\tilde{X} ;\,x,\Omega).
\end{equation}
For instance, in global $\adss$, using the Kaluza-Klein expanded form of $\mathcal{K}$ in \eqref{eq:Kformal},
\begin{equation}
\Pi(\tilde{z},\tilde{x},\tilde{\Omega};\,x) = \sum_{L=0}^\infty \sum_{I_L} {Y_L^{I_L}}(\tilde{\Omega}) \; \phi_{L}^{I_L}(x) K_{\Delta_L}(\tilde{z},\tilde{x} ;\,x),
\end{equation}
where $K_{\Delta_L}$ is the usual $L$-mode bulk-to-boundary propagator, given in \eqref{eq:KLonly}. 
The bilocal field $\Pi$ in \eqref{eq:bulkbdy2pt} can be thought of as the response function of a boundary probe $\phi$ at $x$ smeared over the $\sk$ to a localized bulk source at $\tilde{X}$ propagated through spacetime via $\mathcal{K}$. This interpretation will naturally follow from a more complicated but closely related construct in Section \ref{ssec:bpsing} that comes out of a correlation function which boundary observers have access to. Although the right-hand side of \eqref{eq:bulkbdy2pt} is integrated over $\Omega$, note that $\Pi$ depends on the profile of $\phi$ as a function of $\Omega$ and is thus sensitive to dependencies on the asymptotic $\sk$. More precisely, if a boundary observer who can measure $\Pi$ had complete control over $\phi$, by tuning the boundary profile to be $\phi(x,\Omega)=\delta^k(\Omega-\Omega')$ parametrized by $\Omega'$, it would be possible for them to scan over $\Omega'$ and reproduce $\mathcal{K}$ precisely. However, note that by completeness of the spherical harmonics, such a choice of $\phi$ would correspond to picking $\phi_L^{I_L}(x) = {Y_L^{I_L}}^*(\Omega')$, which according to \eqref{eq:phiop} would build $\mathcal{O}_\phi$ out of operators $\mathcal{O}_L^{I_L}$ of all dimensions $L$, including arbitrarily heavy ones.

More realistically, one might want to only use light operators and get as good an approximation to $\mathcal{K}$ as possible. With this goal, consider letting $\phi_L^{I_L}=\delta_{L\tilde{L}} \delta^{I_L {\tilde{I}_{\tilde{L}}}}$ in \eqref{eq:boundarysource} (which corresponds to simply $\mathcal{O}_\phi = \mathcal{O}_{\tilde{L}}^{{\tilde{I}_{\tilde{L}}}}${), and label the resulting right-hand side in \eqref{eq:bulkbdy2pt} by $\Pi_{\tilde{L}}^{{\tilde{I}_{\tilde{L}}}}$}. This allows one to invert \eqref{eq:bulkbdy2pt} by writing $\mathcal{K}$ as a harmonic series
\begin{equation}\label{eq:Kseries}
    \mathcal{K}(\tilde{X} ;\,x,\Omega) = \sum_{L=0}^\infty \sum_{I_L} \, {\Pi_L^{I_L}(\tilde{X};\,x)} \, {Y_L^{I_L}}^*(\Omega),
\end{equation}
where the correlators in the sum are effectively the Fourier coefficients of the expansion. For an approximation to $\mathcal{K}$, one may want to employ $L$ modes only up to some finite cut-off $L_\infty<\infty$. It should be noted that \eqref{eq:Kseries} applies to any asymptotically locally $\adss$ spacetime (cf. \eqref{eq:Ksum} and comments below).

We would like to obtain the position of a local bulk source solely from the boundary perspective using the bulk-to-boundary propagator. It is pertinent at this point to make clear the semantic distinction between \emph{localizing} and \emph{locating}. We do not want to create a perturbation \emph{localized} on $\sk$, which would require the whole tower of Kaluza-Klein modes. Instead, what we want is to \emph{locate} a source that already is localized on $\sk$, which need not require such high-$L$ physics. Indeed, in the tractable case of global $\adss$, we now show that using $\Pi$ it is possible from the boundary perspective to find the exact location on $\sk$ of a localized bulk source employing just $L=1$ operators.

Let $\mathcal{O}_\phi$ only involve light operators in the fundamental representation of $SO(k+1)$ such that only $L=1$ harmonics contribute to $\phi$. With homogeneous sources, a general expression for the latter is obtained by writing the coefficients $\phi_L^{I_L}=\delta_{L,1} {Y_1^{I_1}}^*(\Omega)$ parameterized by a point $\Omega$ on $\sk$. 
Suggestively writing $\Pi(\tilde{X} ;\, x) = \Pi_1(\tilde{X} ;\, x,\Omega)$ for this choice of $\phi$, 
\eqref{eq:bulkbdy2pt} becomes\footnote{Note that $\Omega$ here has been introduced as just a parameter for the choice of coefficients $\phi_L^{I_L}$.}
\begin{equation}\label{eq:L1prop}
\begin{aligned}
\Pi_1(\tilde{z},\tilde{x},\tilde{\Omega} ;\, x,\Omega ) & = \sum_{I_1} {Y_1^{I_1}}^*(\Omega) \int_{\sk} d\Omega' \; Y_{1}^{I_1}(\Omega') \mathcal{K}(\tilde{X} ;\,x,\Omega') \\ & = (k+1) K_{\Delta_1}(\tilde{z},\tilde{x};\,x) \cos\theta,
\end{aligned}
\end{equation}
where $\theta$ is the angular separation between coordinates $\tilde{\Omega}$ and $\Omega$ on $\sk$. Therefore, a boundary observer that is able to vary $\Omega$ will find $\Omega=\tilde{\Omega}$ precisely at the maximum of $\Pi_1$, corresponding to $\theta=0$. This shows that, from the boundary perspective, the function $\Pi_1$ of $L=1$ modes allows one to \emph{locate} the exact position on $\sk$ at which a \emph{localized} bulk source resides.

\subsection{Recovering the extended cut}\label{ssec:bpsing}

As observed in previous sections, even with a compact space asymptotically, one can determine the standard light-cone cuts $C(p)$ from bulk-point singularities in certain boundary correlators. The only change is the number of operators in the correlator. The light cone of an arbitrary boundary point permeates the bulk as a submanifold of bulk-codimension one. In a generic spacetime, the intersection of  the light cones of $\ell$ arbitrary boundary points will generically be a submanifold of bulk-codimension $\ell$ (or the empty set when $\ell>D$.)\footnote{As stated, it is important for this result to be generic that the spacetime compactification does not factorize exactly or have exact symmetries, and that the boundary points be chosen arbitrarily. Global $\adss$ is thus clearly non-generic.} So one needs at least $D$ operators {to single out a point in the bulk}. {In this section we} further refine the usage of these correlators in order to obtain the map $\Phi$. In particular, the strategy will be to find $\Phi$ from the prefactor of the {leading} divergent {term} of bulk-point singular correlators, which {exhibits a suitable dependence} on the asymptotic $\sk$.

We start with a divergent correlator {as} used to find the standard light-cone cuts $C^\pm(p)$ of some bulk point $p\in M$,\footnote{Any set of operators referred to henceforth shall be assumed to correspond to some local interaction term in the action of the bulk theory. For example, in $10$-dimensional supergravity, there is a coupling between the dilaton and $3$-form, $e^{-2\phi} H_3^2$. Expanding out the exponential yields $\phi^{10} H_3^2$ interaction terms.}

\begin{equation}\label{eq:ptsing}
\left\langle \mathcal{T}\left\{ \prod_{i=1}^{D+2} \mathcal{O}(x_i) \right\} \right\rangle
\end{equation}
where $\mathcal{T}$ denotes time ordering.
As argued above, the choice of any $D$ such points $x_i$ in the correlator above singles out $p$ as the unique bulk point that is null-related to all of them. As shown in Fig. \ref{fig:ebpc}, we place these $D$ points on the future cut and add two points on the past cut. By moving these two points in a way that keeps the correlator divergent (which requires maintaining momentum conservation at $p$), we can trace out the past cut. This does not depend on the choice of operator insertions. Let us denote all but one of these boundary points collectively by $\vec{x}=\{ x_i\in C^\pm(p) \st i = 2,\dots, D+2\}$ and use $L=0$ scalar operators $\mathcal{O}$ at all of these $D+1$ points.\footnote{One may want to consider more general insertions $\mathcal{O}_{\phi_i}$ at each boundary point, but this is unnecessary.} To obtain the extended cut, it will be convenient to work with a probe point $x$ near the remaining point $x_1$. We choose an operator $\mathcal{O}_\phi$ at $x$ as in \eqref{eq:phiop} which is sensitive to the $\s^k$.

\begin{figure}
    \centering
    \includegraphics[height=.4\textheight]{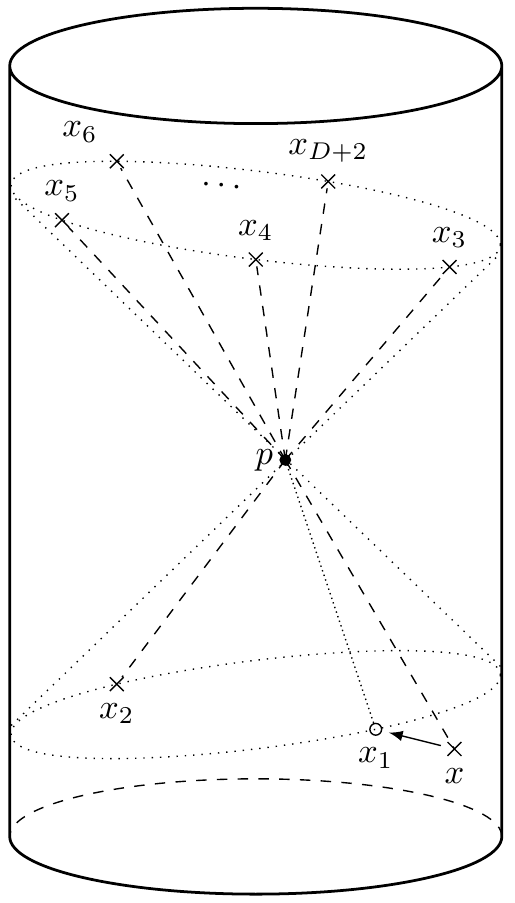}
    \caption{{Configuration of boundary points in the correlator \eqref{eq:fphi} used to obtain the extended light-cone cuts from the dual field theory. By choosing suitable operators at $x$ and looking at the coefficient of the divergence when $x$ approaches the cut point $x_1\in C^-(p)$, one can obtain the map $\Phi$ from regular cut points to the $\sk$.}}
    \label{fig:ebpc}
\end{figure}

As a result of the existence of the null-related, momentum-preserving point $p$, the $(D+2)$-point, time-ordered Lorentzian correlator
\begin{equation}\label{eq:fphi}
F_{\phi}(x) \equiv \left\langle \mathcal{T}\left\{ \mathcal{O}_{\phi}(x) \prod_{i=2}^{D+2} \mathcal{O}(x_i) \right\} \right\rangle
\end{equation}
will develop a bulk-point singular contribution in the limit $x\to x_1$ \cite{Maldacena2015}. This divergent limit of interest is illustrated in Fig. \ref{fig:ebpc}. Written out in a particularly convenient form, for some choice of normalization, the correlation function \eqref{eq:fphi} reads
\begin{equation}\label{eq:convcorr}
F_{\phi}(x) = \int_{\sk} d\Omega \; \phi(x,\Omega) \int_{M} d^{D}\tilde{X} \; \Psi_{\vec{x}}(x,\Omega;\,\tilde{X}),
\end{equation}
where the integrand of the bulk-point integral is 
\begin{equation}\label{eq:psix}
\Psi_{\vec{x}}(x,\Omega;\,\tilde{X}) = \lambda \; \mathcal{K}(x,\Omega ;\, \tilde{X}) \int_{\sk} \prod_{i=2}^{D+2} d\Omega_i \; \mathcal{K}(x_i,\Omega_i ;\, \tilde{X}),
\end{equation}
and $\lambda$ is the coupling of some local interaction involving the $D+2$ fields of interest.

The dominant bulk-point singular contribution from $p$ to \eqref{eq:convcorr} manifests itself as the highest-order pole in $\Psi_{\vec{x}}$, precisely located at {the coordinates $X_p$ of $p$}, close to which the function $\Psi_{\vec{x}}$ will be governed by a power-law divergence in the proper distance between $\tilde{X}$ and $X_p$. To intuitively see why this is the case, observe first that propagators generally behave as inverse powers of proper distances between the points in their arguments, here with coordinates $(x_i,\Omega_i)$ and $\tilde{X}$. Importantly, because the asymptotic $\sk$ trivializes on $\partial M$, this proper distance does not depend on the value of $\Omega_i$ for the boundary point. Now, since all $x_i$ are null-related to $X_p$, for $\tilde{X}$ in a small neighborhood of $X_p$, to leading order the proper distance $s(x_i,\tilde{X})$ between any boundary point $x_i$ and $\tilde{X}$ will be proportional to $s(\tilde{X},X_p) \approx \norm{\tilde{X}-X_p}$, where the use of the Minkowski metric in the last approximation is justified by local flatness at $X_p$. The dependence on the choice of boundary points $\vec{x}$ is thus relegated simply to the specification of the unique bulk point $p$ in this equation (cf. the rank argument in \cite{Maldacena2015}) and the form of the residue of the pole of $\Psi_{\vec{x}}$ at $X_p$. The order of the dominant pole $\Delta_{D+2}$ depends on the operator insertions and details of the spacetime metric.\footnote{In the case of global $\adss$, the symmetries lead to $\Delta_{D+2}$ being just a sum over the largest scaling dimension of each of the boundary operator insertions.} Pulling out the leading divergent factor in $\Psi_{\vec{x}}$, one may write
\begin{equation}\label{eq:leadingPsi}
\Psi_{\vec{x}}(x,\Omega;\,\tilde{X}) = \frac{\psi(x,\Omega;\,\tilde{X})}{\norm{\tilde{X}-X_p}^{\Delta_{D+2}}},
\end{equation}
where now the function $\psi$ is finite and non-zero at $\tilde{X}=X_p$. To leading order in the distance $\norm{x-x_1}$ off the light-cone cut, the integral of \eqref{eq:leadingPsi} over $\tilde{X}$ will be dominated by the zeroth order term of $\psi$ in a series expansion about $\tilde{X}=X_p$ and evaluated at $x=x_1$. This leads to 
\begin{equation}\label{eq:importantresult}
F_{\phi}(x) = I(x) \int_{\sk} d\Omega \; \phi(x,\Omega) \psi(x_1,\Omega ;\, X_p),
\end{equation}
where $I(x)$ captures the bulk-point singularity as $x\to x_1$ from the integral over $\tilde{X}$,\footnote{The order of the pole agrees with the result in \cite{Maldacena2015} if one identifies $\Delta_{D+2} \to (D+1)\Delta$ (corresponding to $D+1$ external vertices rather than $D+2$), and $D \to d + 1$ (corresponding to no internal space).}
\begin{equation}
I(x) \propto \norm{x-x_1}^{-\( \Delta_{D+2} - D \)}.
\end{equation}

The previous section showed that in global $\adss$ it was possible to locate the unique direction specified by $\Omega_1$ in which the null geodesic from $X_p$ arrives at $x_1$ {using the object defined in \eqref{eq:bulkbdy2pt}}. The reason for this could be traced back to the fact that the higher-dimensional propagator $\mathcal{K}$ in \eqref{eq:L1prop} had a global maximum at $\Omega=\Omega_1$. By causality, this fact is expected to extend to arbitrary spacetimes, where now the general function $\psi$ in \eqref{eq:importantresult} is the object peaked at $\Omega = \Omega_1$.\footnote{In $\adss$, it sufficed to use light modes with $L=1$ to locate this point since in this highly symmetric case, all nontrivial Kaluza-Klein modes are peaked at the same point. In a general spacetime, it is still expected that $\psi$ will have a global maximum at $\Omega = \Omega_1$, but no single $L$ mode need be peaked there. Put differently, $\psi$ will generically exhibit no symmetries in $\Omega$ and higher $L$ will be required to locate $\Omega_1$.}

In analogy to the previous section, choosing $\phi$ to to consist of a single $Y_L^{I_L}$, define
\begin{equation}\label{eq:FdefiniteL}
F_{L}^{I_L}(x) = I(x) \int_{\sk} d\Omega \; {Y_L^{I_L}}(\Omega) \psi(x_1,\Omega ;\, X_p),
\end{equation}
which, up to $I(x)$, may be thought of as the Fourier coefficients of an expansion of $\psi$ into hyperspherical harmonic functions. Inverting this relation leads to
\begin{equation}\label{eq:invertedFdefiniteL}
\psi(x_1,\Omega ;\, X_p) = \lim_{x\to x_1} \frac{\psi_0}{F_0(x)} \sum_{L=0}^{\infty} \sum_{I_L} F_{L}^{I_L}(x) {Y_L^{I_L}}^*(\Omega),
\end{equation}
where $F_0$ corresponds to \eqref{eq:FdefiniteL} for $L=0$ and is introduced to cancel out the common bulk-point singular factor of every term in the series. The constant $\psi_0$, given by
\begin{equation}\label{eq:psirecover}
    \psi_0 = \frac{1}{\vol\s^k} \int_{\s^k} d\Omega \; \psi(x_1,\Omega;\, X_p),
\end{equation}
is irrelevant and will be left undetermined.\footnote{If $\psi_0$ vanishes identically so will $F_0$, and one may just use a different $L$ mode to cancel out singular factors.} The upshot is that, up to an overall constant, $\psi$ can be reconstructed to arbitrary precision by computing the terms in the series in \eqref{eq:invertedFdefiniteL} for increasingly high $L$ values. Since the right-hand side is built solely out of boundary correlators, this information is in principle accessible to boundary observers. Once obtained, the location of the global maximum of $\psi$ in $\Omega$, namely $\Omega_1$, determines the desired map $\Phi$ to the asymptotic $\sk$. More explicitly, one obtains $\Phi(x_1)=\Omega_1$ from the solution to $\psi(x_1,\Omega;\, X_p) = \max \psi$, where the specific choice of the $D+1$ additional light-cone cut points $\vec{x}$ may be ignored since it is arbitrary so long as they belong to the same choice of past and future light-cone cuts $C(p)$ of point $p$ at $X_p$ (see Fig. \ref{fig:ebpc}).


\section{Discussion}\label{sec:discuss}
Most discussions of bulk reconstruction in holography consider asymptotically AdS spacetimes and ignore the extra compact directions. This was true for the discussion of light-cone cuts in \cite{Engelhardt2016,Engelhardt2017}. 
We have considered asymptotically locally $\adss$ spacetimes and defined a generalization of light-cone cuts that we call extended cuts. We then showed that in the region of spacetime causally connected to the boundary, one can generically recover the full higher-dimensional conformal metric just from the location of the extended cuts. Finally, we proposed a procedure for determining these extended cuts from the dual field theory. Note that at no time did we need to use any bulk equations of motion, {or impose any restrictions on the matter content (such as energy conditions)}.

Our proposal for determining the extended cuts from the dual field theory is not very practical since it requires considering the entire tower of Kaluza-Klein modes to precisely locate the bulk points. But the lesson is that the information is there in principle.\footnote{In practice, from the perspective of the dual field theory, one would probably first want to know how many extra dimensions the bulk spacetime has. This interesting question was recently addressed in \cite{Alday:2019qrf}.} It would be interesting to find a more efficient way to determine the extended cuts.

Although we have focused on the case where the bulk metric asymptotically approaches $\adss$, our reconstruction should work equally well for spacetimes that asymptotically approach $\ads\times K$, where $K$ is any compact Einstein space. To see this, note that null geodesics that remain on $\partial J(p)$ will again approach a fixed point on $K$, and we can again define our extended cut to be the light-cone cut $C(p)$ together with a map $C(p) \to K$. The arguments in Section \ref{sec:exlcc} then apply to show that the conformal metric can be reconstructed from the location of these extended cuts. One difference with $\sk$ is that when we conformally rescale the asymptotic metric, the result will take the form \eqref{eq:barg} with $d\Omega^2$ replaced by the metric on $K$, which will be singular at the conformal boundary. This should not be a problem since our arguments only require that each point on the extended cut defines a unique ingoing null geodesic in the bulk. Since we know the geodesic starts at a fixed point on $K$, and the bulk metric reduces to pure AdS asymptotically when a point on $K$ is held fixed, the geodesic leaves the boundary exactly as it would in pure AdS. The arguments in Section \ref{sec:bdydata} also extend to this case since the hyperspherical harmonics on $\sk$ can be replaced by the eigenfunctions of the Laplacian on $K$ which form a complete basis of functions. Scalar fields can be expanded in terms of these functions, yielding the usual infinite tower of massive Kaluza-Klein modes in the asymptotic AdS region. Holography requires that there is a CFT operator dual to each of these modes which we can use in our correlators. 

It is natural to ask how quantum or stringy corrections affect our arguments. It was argued in \cite{Maldacena2015} that bulk-point singularities would still be present when perturbative $1/N$ or $1/\lambda$ corrections to holography are included, but not for finite $N$ or $\lambda$. More recently, the stringy resolution of these singularities has been quantified under some general assumptions in \cite{Dodelson2019}. Since bulk-point singularities are a key ingredient in our approach, we note that exact reconstruction of the bulk conformal metric is possible with perturbative but not finite quantum or stringy corrections. 

We close with a few open questions. First, to recover the full bulk metric and not just the conformal metric, we clearly need a procedure to obtain the conformal factor. One would like this to be independent of the bulk equations of motion. Second, general arguments on bulk reconstruction \cite{Dong:2016eik} show that one should be able to reconstruct the higher-dimensional metric on the entire entanglement wedge of the boundary. The light-cone cut approach to bulk reconstruction only applies to points in the causal wedge, since they have to be in causal contact with the boundary both to the past and future. Actually, not all points in the causal wedge are accessible since momentum must be conserved near the vertex. This means that points just outside the horizon of an eternal black hole are excluded since they are causally connected to infinity only through a narrow cone. 

To expand the reach of the light-cone cuts, we either need more general ways to obtain the extended cuts from the dual field theory, or we need to use other methods (perhaps combined with light-cone cuts). The recent work of \cite{May:2019odp} relating bulk scattering and holographic entanglement suggests a plausible direction to connect the light-cone cut approach to bulk reconstruction with those based on entropic measures, thereby hinting at a potentially synergistic combination of the two. It is nevertheless important to note that it is unlikely that the standard holographic entanglement entropy as given by the prescriptions in \cite{Ryu_2006,Hubeny_2007} could on its own be used for higher-dimensional bulk reconstruction. In a variety of nontrivial examples, it has been shown that these prescriptions applied to dimensionally-reduced spacetimes give entropies that agree with those obtained by performing the extremization problem on the full higher-dimensional spacetime, which suggest that the latter carries no more information about the extra dimensions than the former does \cite{Jones_2016}. Intuitively, this is a consequence of the boundary condition that instructs the higher-dimensional extremal surfaces to wrap uniformly around the compact dimensions asymptotically. However one might be able to generalize these ideas, perhaps along the lines of \cite{Mollabashi_2014,Karch_2015,Taylor_2016}, to probe the higher-dimensional geometry. If a suitable boundary interpretation of this generalized entropy is available, one could perhaps use e.g. some upgraded version of the arguments in \cite{Bao_2019} to prove uniqueness of the higher-dimensional metric and potentially come up with an entropy-based reconstruction strategy.

\acknowledgments
We thank X. Dong, S. Fischetti, J. Maldacena, D. Marolf and H. Maxfield for helpful discussions.
This work was supported in part by the National Science Foundation under Grants PHY-1801805 and PHY-1748958.
SHC was also supported by fellowship number LCF/BQ/AA17/11610002 from ``la Caixa'' Foundation (ID 100010434).

\appendix

\section{Mathematical Results}\label{sec:maths}
In this appendix we give the proofs of the new results stated in section 3. We will assume that the spacetime $M$ is at least $C^2$, maximally extended, connected, AdS-hyperbolic and asymptotically locally $\adss$ with $n>2$. Similarly, $\partial M$ is assumed to be maximally extended, connected, and globally hyperbolic. Recall that AdS-hyperbolic means that there are no closed causal curves, and for any two points $p$ and $q$, the set $J^+(p)\cap J^-(q)$ is compact after conformally compactifying the AdS boundary \cite{Wall_2014}.

We will assume everywhere in this section that $p$ and $q$ are bulk points in the domain of influence of the asymptotic boundary, so {that their} light-cone cuts are not empty. The results below apply to both future and past light-cone cuts which we denote $C(p)$, or $\mathcal{C}(p)$ for the extended cuts.
In expressions like
$\mathcal{C}(p) \cap \mathcal{C}(q)$ it will be understood that both cuts are past or both cuts are future.

\propzeromes*
\begin{proof}
	The logic of the first part of this proof parallels that of Proposition $6.3.1$ in \cite{Hawking1973}.\footnote{There is a typo in the proof in \cite{Hawking1973}: both instances of the set $\mathscr{L}$ appearing in the penultimate sentence should be replaced by its boundary set $\dot{\mathscr{L}}$.} Let $r\in C(p)$ and consider an open neighborhood $U_\alpha\subset\partial M$ about $r$. One can introduce normal coordinates $x_\alpha=\{x_\alpha^\mu : U_\alpha \to\R \st \mu = 0,\dots,n-2\} $ with $\partial_{0}$ timelike and such that the coordinate slices $\gamma_{\vec{c}} = \{s\in U_\alpha \st \vec{x}_\alpha(s) = \vec{c} \}$, where $\vec{x}_\alpha = \{x_\alpha^i \st i=1,\dots,n-2\}$, define curves intersecting both $I^-(r)\cap U_\alpha$ and $I^+(r)\cap U_\alpha$ for any constant $\vec{c} \in \vec{x}_\alpha[U_\alpha]$. By continuity and achronality, each curve $\gamma_{\vec{c}}$ must intersect $C(p)$ at precisely one point $s_{\vec{c}}$, i.e. $\{s_{\vec{c}}\}=\gamma_{\vec{c}}\cap C(p)$, and therefore the map $\vec{x}_\alpha : U_\alpha\cap C(p) \to \R^{n-2}$ is a homeomorphism onto its image.
	
	Now define a map $\tilde{x}^0 : \vec{x}[U_\alpha\cap C(p)] \to x^0[U_\alpha]$ by $\tilde{x}^0(\vec{c}) = x^0(r_{\vec{c}})$, where $r_{\vec{c}}$ is the unique point at which $\gamma_{\vec{c}}$ intersects $C(p)$. Because $C(p)$ is achronal, for any two points $r,s\in U_\alpha\cap C^{}(p)$ one has that $\abs{\tilde{x}^0(\vec{x}(r)) - \tilde{x}^0(\vec{x}(s))} \leq K \abs{\vec{x}(r)-\vec{x}(s)}$ for some $K\geq1$, with $\abs{\,\cdot\,}$ the Euclidean norm. This shows that $\tilde{x}^0$ is Lipschitz continuous. A Lipschitz continuous transition map $\varphi_{\alpha\beta} : \vec{x}_\alpha[U_\alpha \cap U_\beta \cap C(p)] \to \vec{x}_\beta[U_\alpha \cap U_\beta \cap C(p)]$ can now be constructed by direct product and composition with maps of higher differentiability class as $\varphi_{\alpha\beta} = \vec{x}_\alpha \circ x_\beta^{-1} \circ \{\tilde{x}^0_\beta,\,\id\}$. Thus a collection of charts $(U_\alpha\cap C(p),\,\vec{x}_\alpha)$ forms an atlas for $C(p)$ and endows it with a Lipschitz structure.
	
	The differentiability of $C(p)$ at a point $r \in U_\alpha \cap C(p)$ is determined by the differentiability class of the transition maps $\varphi_{\alpha\beta}$ at $\vec{x}_\alpha(r)\in\vec{x}_\alpha[U_\alpha \cap U_\beta \cap C(p)]$. Because the transition map $\varphi_{\alpha\beta}$ is Lipschitz continuous, Rademacher's theorem \cite{RademacherberPU} implies that the points in $\vec{x}_\alpha[U_\alpha \cap U_\beta \cap C(p)]\subset \R^{n-2}$ at which $\varphi_{\alpha\beta}$ is not differentiable form a set of Lebesgue measure zero as a subset of $\R^{n-2}$. Thus the set of points at which $C(p)$ fails to be differentiable has measure zero.
\end{proof}

\propnormalgeo*
\begin{proof}
    Consider an arbitrary point $(r,\Phi(r))\in\mathcal{C}(p)$ and let $\gamma:[0,1]\to\bar{M}$ be the unique null geodesic from $r=\gamma(0)\in C(p)$ to $p=\gamma(1)\in M$. Write $\dot{\gamma}(0) \propto V^\perp + V^\parallel$ with $V$ some rescaled vector parallel to $\dot{\gamma}(0)$ such that $V^\perp$ has unit norm, where $V^\perp$ ($V^\parallel$) is the projection of $V$ onto the normal (tangent) bundle of $\partial M$. Since $r$ is regular, $C(p)$ is differentiable at $r$, and therefore there is a well-defined space tangent to $C(p)$ at $r$, denoted $T_r C(p)$. Because $C(p)$ is a codimension-$1$ spacelike subspace of $\partial M$, there is a unique timelike vector $T\in T_r \partial M$ normal to $C(p)$ with $T^2=-1$. Under natural identifications of the vectors in $\partial M$ with their inclusions in the ambient space $\bar{M}$, one can further decompose $\dot{\gamma}(0) \propto T + \cos\alpha \, V^\perp + \sin\alpha \, S$, where $S\in T_r C(p)$ is a unit spacelike vector and $\alpha\in[0,\pi/2)$. If $\alpha\neq0$, there would be a nontrivial vector $S$ such that one could consider a point $r_\epsilon\in C(p)$ arbitrarily close to $r$ in the direction parallel to $S$. Notice that then one could deform $\gamma$ infinitesimally near $\partial M$ into a timelike piece that connects up with $r_\epsilon$, thus making $p$ and $r_\epsilon$ timelike-related, which contradicts the achronality of the light cone $\partial J(p)$. Hence one finds that $\alpha=0$ necessarily, and therefore $\dot{\gamma}(0) \propto T + V^\perp$ in general. In other words, regularity of the cut point implies that the vector field $\dot{\gamma}$ tangent to $\gamma$ is orthogonal to $C(p)$ at $r$. The dimensionality of the normal bundle of $C(p)$ in $\bar{M}$ is given by the codimension of $C(p)$, which is $k+2$ corresponding to timelike and radial bulk directions and the conformally-shrinking $\sk$. The specification of $\Phi(r)$ by the extended cut point fixes the direction of $\dot{\gamma}(0)$ on $\sk$, such that this vector only remains undetermined in $2$ dimensions. Out of the $2$ possible null directions spanning the latter, only one points inwards towards the bulk. Hence the choice of a point in $\mathcal{C}(p)$ together with the orthogonal ingoing condition fix $\dot{\gamma}(0)$ up to scaling. But because $\dot{\gamma}(0)$ is null and $\gamma$ is geodesic, this suffices to determine a unique null geodesic from $r$ to $p$.
\end{proof}

\propmorethanonepoint*
\begin{proof}
    ~\\
	$(\Leftarrow)$ If $p=q$, $\mathcal{C}(p)\cap \mathcal{C}(q)=\mathcal{C}(p)$, which always contains more than one point. \\
	$(\Rightarrow)$ Consider an arbitrary point $(r,\Phi(r))\in \mathcal{C}(p)\cap \mathcal{C}(q)$. According to Proposition \ref{prop:normalgeo}, the pair $(r,\Phi(r))$ determine a unique ingoing null geodesic $\gamma_r$. If there were two distinct such points in the intersection of the two extended cuts, their associated $\gamma_r$ geodesics would pass through both $p$ and $q$, which would then be either equal or conjugate to each other. But since there cannot be any conjugate points along any $\gamma_r$ strictly between either of these points and their cuts, it must be the case that $p=q$.
\end{proof}

\begin{nthm}\label{thm:stronger}
	$\mathcal{C}(p) \cap \mathcal{C}(q)$ contains exactly one point $(r,\Phi(r))$ if and only if $q\neq p$ and $q$ belongs to an achronal extension of a null geodesic $\gamma$ from $p$ to a regular point $r\in C(p)$.\footnote{The statement of an analogous result in \cite{Engelhardt2017} is not quite correct. In particular, $q$ need not belong to the null geodesic from $p$ to $r\in C^\pm(p)$, but instead could lie on an extension of this geodesic beyond $p$ (i.e. $p$ itself would lie in a null geodesic connecting $q$ to $r$). This explains the qualification of the statement to an \emph{achronal extension} of the null geodesic from $p$ to $r$.}
\end{nthm}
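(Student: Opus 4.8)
The plan is to prove the two implications separately, with Proposition~\ref{prop:normalgeo} and Proposition~\ref{prop:morethanonepoint} doing all the real work; Theorem~\ref{thm:stronger} is then essentially a bookkeeping statement about which bulk point the common extended-cut point ``sees''.

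For the ``only if'' direction, suppose $\mathcal{C}(p)\cap\mathcal{C}(q)=\{(r,\Phi(r))\}$. First $p\neq q$, since otherwise the intersection would equal $\mathcal{C}(p)$, which contains more than one point. Reading $(r,\Phi(r))$ as a point of $\mathcal{C}(p)$, Proposition~\ref{prop:normalgeo} produces a unique ingoing null geodesic $\gamma$ from $r$ passing through $p$; reading the same point as a point of $\mathcal{C}(q)$ produces a unique ingoing null geodesic $\gamma'$ from $r$ passing through $q$. I would then argue $\gamma=\gamma'$: both are ingoing null geodesics issuing from $r$ with the same asymptotic $\sk$-point $\Phi(r)$ and the same time orientation, so by the analysis underlying Proposition~\ref{prop:normalgeo} they can differ only in their direction tangent to $\partial M$, i.e.\ only if $C(p)$ and $C(q)$ fail to be tangent at $r$. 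But a transversal crossing of the two cuts at $r$ would, in a spacetime with no asymptotic symmetries acting only on the internal space, force $C(p)\cap C(q)$ to be larger than a point and, along it, $\Phi_p$ and $\Phi_q$ to agree near $r$, yielding more than one point in $\mathcal{C}(p)\cap\mathcal{C}(q)$ and contradicting Proposition~\ref{prop:morethanonepoint}. Hence $\gamma=\gamma'$ passes through both $p$ and $q$, so they are null-related; and since $r\in C(q)=\partial J(q)\cap\partial M$, the portion of $\gamma$ from $r$ to $q$ lies on $\partial J(q)$ and is therefore achronal, so $q$ lies on an achronal extension of the null geodesic $\gamma|_{r\to p}$ from $p$ to the regular point $r\in C(p)$.

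For the ``if'' direction, suppose $q\neq p$ and $q$ lies on an achronal extension $\tilde\gamma$ of a null geodesic $\gamma$ from $p$ to a regular point $r\in C(p)$. Here I would check that $(r,\Phi(r))\in\mathcal{C}(q)$: the curve $\tilde\gamma$ reaches $\partial M$ at $r$ and, being achronal and passing through $q$, has its $r$-to-$q$ segment on $\partial J(q)$, so $r\in C(q)$; regularity of $r$ as a cut point of $q$ follows because a second null geodesic from $q$ to $r$ would either introduce a conjugate point on $\tilde\gamma$ strictly before $q$ or permit a timelike shortcut making $r\in I(q)$, both impossible (with the mild genericity that forbids a second geodesic separated from $\tilde\gamma$ only along the shrinking $\sk$); and since $\tilde\gamma$ agrees with $\gamma$ on a neighbourhood of $r$, we get $\Phi_q(r)=\Phi_p(r)=\Phi(r)$. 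Thus $(r,\Phi(r))\in\mathcal{C}(p)\cap\mathcal{C}(q)$, so the intersection is nonempty, and Proposition~\ref{prop:morethanonepoint} together with $p\neq q$ upgrades ``nonempty'' to ``exactly one point''.

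The hard part is the coincidence $\gamma=\gamma'$ in the ``only if'' direction — equivalently, excluding a transversal crossing of $C(p)$ and $C(q)$ at $r$ whose $\sk$-maps happen to agree only at that single point. The clean input is that Proposition~\ref{prop:morethanonepoint} already forbids the positive-dimensional extended intersections that a transversal crossing generically creates; the delicate input is converting ``generically'' into the stated hypothesis of no asymptotic isometries acting purely on the internal space and controlling $\Phi$ along $C(p)\cap C(q)$, together with the parallel genericity needed in the ``if'' direction to ensure $r$ is a regular cut point of $q$. This is also precisely the place where the statement improves on its predecessor in \cite{Engelhardt2017}: once $\gamma=\gamma'$ is secured, $q$ need only lie somewhere on the maximal achronal piece of that geodesic, which may extend beyond $p$.
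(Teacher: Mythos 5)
Your overall architecture matches the paper's: Proposition~\ref{prop:morethanonepoint} gives $p\neq q$, Proposition~\ref{prop:normalgeo} converts the shared extended-cut point into an ingoing null geodesic, and the converse direction exhibits $(r,\Phi(r))$ in both extended cuts and lets Proposition~\ref{prop:morethanonepoint} plus $p\neq q$ cap the intersection at one point. Where you diverge is the step you yourself flag as the hard part, the coincidence $\gamma=\gamma'$. The paper does not argue this at all: it reads Proposition~\ref{prop:normalgeo} as saying that the point $(r,\Phi(r))$ \emph{itself} determines the unique ingoing null geodesic $\gamma_r$, so the same $\gamma_r$ is obtained whether the point is read as belonging to $\mathcal{C}(p)$ or to $\mathcal{C}(q)$, and hence $\gamma_r$ passes through both $p$ and $q$ while staying on $\partial J(p)\cup\partial J(q)$. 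You are right that, unpacked, the reconstruction $\dot\gamma(0)\propto T+V^\perp$ uses the tangent plane of the cut through $T$, so cut-independence amounts to tangency of $C(p)$ and $C(q)$ at $r$; but your proposed repair does not close this. The claim that a transversal crossing would force $\Phi_p$ and $\Phi_q$ to agree \emph{near} $r$ along $C(p)\cap C(q)$ is precisely what would need proving --- they agree at $r$ by hypothesis, and two continuous maps into $\sk$ can perfectly well agree at a single point of a positive-dimensional set --- and the genericity clause you invoke (no asymptotic internal isometries) is not a hypothesis of Theorem~\ref{thm:stronger}. For $n=3$ the transversality argument yields nothing, since two transversally crossing curves already meet in a zero-dimensional set.

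In the converse direction you again follow the paper, which simply declares $(r,\Phi(r))$ ``clearly'' in both extended cuts and leaves the upgrade from nonempty to a single point implicit. Your extra check that $r$ is a regular cut point of $q$ addresses a real issue the paper suppresses, but your argument is clean only when $q$ lies between $p$ and $r$: there a second null geodesic from $q$ to $r$ concatenates with $\gamma|_{p\to q}$ into a broken null curve, putting $r$ in $I(p)$ and contradicting regularity of $r$ for $C(p)$. When $q$ lies on the extension beyond $p$ no such concatenation is available, and you again fall back on unstated genericity. In short: same route as the paper, but the two places where you try to be more careful than the paper are exactly the places where your argument is not yet a proof; either adopt the paper's cut-independent reading of Proposition~\ref{prop:normalgeo}, or supply an actual tangency argument for the cuts at $r$ rather than a genericity appeal.
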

\begin{proof}
    ~\\
    $(\Rightarrow)$\footnote{This direction proves Theorem \ref{thm:exactlyonepoint} in Section \ref{ssec:oldwork}.} Since $\mathcal{C}(p)\cap \mathcal{C}(q)$ contains exactly one point $(r,\Phi(r))$, Proposition \ref{prop:morethanonepoint} already implies $p\neq q$. Then Proposition \ref{prop:normalgeo} shows that $(r,\Phi(r))$ defines the unique null geodesic $\gamma_r$ associated to the regular point $r$. Since $(r,\Phi(r))$ belongs to the intersection of the two cuts, $\gamma_r$ passes through both $p$ and $q$ and stays on the union of their light cones $\partial J(p)\cup \partial J(q)$. Thus the two points are null-related by an achronal geodesic through both that ends at $r$. \\
	$(\Leftarrow)$ If $p$ and $q$ both lie on an achronal null geodesic $\gamma$ that reaches a regular point $r\in C(p)$ and $\Phi(r)$ on $\sk$, then $\gamma$ lies on both $\partial J(p)$ and $\partial J(q)$. So $(r,\Phi(r))$ is clearly in both extended cuts $\mathcal{C}(p)$ and $\mathcal{C}(q)$.
\end{proof}


\section{Higher-dimensional scalar propagators in global $\adss$}\label{asec:props}

Consider a free bulk scalar field of mass $m$ with Euclidean action
\begin{equation}\label{eq:matterS}
S_\varphi = \frac{1}{2} \int \epsilon \( \abs{d\varphi}_g^2 + m^2 \varphi^2 \),
\end{equation}
where $\epsilon$ is the volume element on all $D=n+k$ dimensions of $\adss$. Using Poincar\'{e} coordinates in Euclidean signature,
\begin{equation}\label{eq:euclideanmetric}
g = \frac{\ell^2}{z^2} \( dz^2 + \delta_{ij} dx^i dx^j \) + \ell^2 d\Omega^2,
\end{equation}
where Latin indices run over the $d=n-1$ spatial dimensions of AdS$_n$.

\subsection{Bulk-to-bulk propagator}\label{assec:bu2bu}
The higher-dimensional bulk-to-bulk scalar propagator $\mathcal{G}$ is defined as the Green function of the Klein-Gordon operator,
\begin{equation}\label{eq:Gdefinition}
\( - \square_g + m^2 \) \mathcal{G}(z,x,\Omega ;\, \tilde{z},\tilde{x},\tilde{\Omega}) = \frac{1}{\sqrt{\det g}} \delta^{n}(z-\tilde{z},x-\tilde{x})\delta^k(\Omega-\tilde{\Omega}).
\end{equation}
where $\square_g$ denotes the d'Alembertian built from the $D$-dimensional metric $g$. Because $\adss$ is a product spacetime, this operator is diagonal and decomposes as
\begin{equation}\label{eq:dalembert}
\square_g = \square_{\ads} + \ell^{-2} \Delta_{\sk},
\end{equation}
where $\square_{\ads}$ only acts on AdS coordinates $(z,x)$ and the unit $k$-sphere Laplacian $\Delta_{\sk}$ only acts on coordinates $\Omega$. Explicitly,
\begin{equation}
\square_{\ads} = \frac{z^2}{\ell^2} \( \partial_z^2 - (d-1) z^{-1} \partial_z + \partial_x^2 \),
\end{equation}
and, using Cartesian coordinates on $\R^{k+1}\supset\sk$, one can write
\begin{equation}
\Delta_{\sk} = \sum_{\alpha>\beta}^k \( x_\alpha \partial_\beta - x_\beta \partial_\alpha \)^2.
\end{equation}

Consider first the propagator $G_\Delta$ of a free scalar in AdS$_n$ of mass $\mu$, defined by
\begin{equation}\label{eq:usualGeq}
\( - \square_{\ads} + \mu^2 \) G_\Delta(z,x ;\, \tilde{z},\tilde{x}) = \frac{1}{\sqrt{\det g_{\ads}}} \delta^n(z - \tilde{z}, x - \tilde{x}).
\end{equation}
This Green function is well-known and can be written in terms of the hypergeometric function $_2 F_1$ as \cite{Gubser:1998bc,DHoker2002}
\begin{equation}\label{eq:greeng}
G_\Delta(z,x ;\, \tilde{z},\tilde{x}) = \frac{2^{-{\Delta}} C_{\Delta}}{2{\Delta}-d} \, \xi^{\Delta} \, {_2 F_1}\(\frac{{\Delta}}{2} ,\, \frac{{\Delta}}{2} + \frac{1}{2} ;\, {\Delta} - \frac{d}{2} + 1 ;\, \xi^2\),
\end{equation}
where the conformal ratio $\xi$ is defined in terms of the coordinates of the two points by
\begin{equation}
\xi \equiv \frac{2 z \tilde{z}}{z^2 + \tilde{z}^2 + (x - \tilde{x})^2},
\end{equation}
and the conformal dimension $\Delta$ and normalization constant $C_{\Delta}$ are
\begin{equation}\label{eq:clml}
\mu^2 = \frac{{\Delta}({\Delta} - d)}{\ell^2} \nd C_{\Delta} = \frac{\Gamma{({\Delta})}}{\pi^{d/2} \Gamma({\Delta} - d/2)}.
\end{equation}
The two solutions of the quadratic equation obeyed by $\Delta$ correspond to the usual two branches
\begin{equation}\label{eq:deltal}
\Delta = \frac{d}{2} \pm \sqrt{\frac{d^2}{4} + \ell^2 \mu^2},
\end{equation}
with the positive (negative) sign giving the normalizable (non-normalizable) one.

Consider now the $\sk$ term in \eqref{eq:dalembert}. The eigenfunctions of $\Delta_{\sk}$ are called hyperspherical harmonics $Y_L^I(\Omega)$ and labeled by their scaling degree $L\in\Z_{\geq0}$ and a tuple $I_L=(i_1,\dots,i_{k+1})\in\mathbb{Z}^{k+1}$ with $\sum_{l=1}^{k+1} i_l = L$, which specifies an element of the representation of $SO(k+1)$ in terms of traceless symmetric tensors of degree $L$ in $k+1$ dimensions. They are defined by the eigenvalue problem
\begin{equation}\label{eq:orthonormality}
\Delta_{\sk} Y_L^{I_L}(\Omega) = - L(L+k-1) Y_L^{I_L}(\Omega).
\end{equation}
and conventionally orthonormalized to satisfy
\begin{equation}
\int_{\sk} d\Omega \, {Y_L^{I_L}}^*(\Omega) \, Y_{\tilde{L}}^{\tilde{I}_{\tilde{L}}}(\Omega) = \delta_{L \tilde{L}} \delta^{I_L {\tilde{I}_{\tilde{L}}}},
\end{equation}
where $d\Omega$ is the volume element of $\sk$. Additionally, as a basis for functions on $\sk$, hyperspherical harmonics obey the completeness relation
\begin{equation}\label{eq:completeness}
\sum_{L=0}^{\infty} \sum_{I_L} \, {Y_L^{I_L}}^*(\tilde{\Omega}) \, Y_L^{I_L}(\Omega) = \frac{1}{\sqrt{\det g_{\sk}}} \delta^k(\Omega - \tilde{\Omega}).
\end{equation}
The sum over $SO(k+1)$ representation indices $I_L$ for fixed $L$ can be performed explicitly and leads to \cite{Wen1985}
\begin{equation}\label{eq:ILsum}
\sum_{I_L} {Y_{L}^{I_L}}^*(\Omega)\, Y_{L}^{I_L}(\tilde{\Omega}) = N_L \, C_L^{(k-1)/2} \( \cos\theta \),
\end{equation}
where $\cos(\theta) \equiv \vec{n} \cdot \tilde{\vec{n}}$ for unit vectors $\vec{n},\tilde{\vec{n}}\in\R^{k+1}$ oriented on $\sk$ as specified by $\Omega$ and $\tilde{\Omega}$, respectively, and $N_L$ is a normalization constant given by
\begin{equation}\label{eq:nlvs}
N_L = \frac{2L+k-1}{(k-1) \vol \sk} \where \vol\s^{2l-1} = \frac{2\pi^l}{\Gamma(l)},
\end{equation}
The symbol $C^\alpha_l(x)$ is a Gegenbauer polynomial, which can be written as
\begin{equation}
C^\alpha_l(x) = \frac{\Gamma(2\alpha + l)}{\Gamma(2\alpha)} \, {_2 F_1} \( -l ,\, 2\alpha + l ;\, \alpha + \frac{1}{2} ;\, \frac{1-x}{2} \).
\end{equation}

It is now a simple matter to construct the desired propagator:

\begin{nprop}
    The higher-dimensional bulk-to-bulk propagator $\mathcal{G}$ for a free scalar of mass $m$ in global $\adss$ given as an infinite series by
    \begin{equation}\label{eq:Gsolution}
    \mathcal{G}(z,x,\Omega ;\, \tilde{z},\tilde{x},\tilde{\Omega}) = \sum_{L=0}^{\infty} N_L C_L^{(k-1)/2} \(\cos\theta \) G_{\Delta_L}(z,x ;\, \tilde{z},\tilde{x}),
    \end{equation}
    where $G_{\Delta_L}$ is the propagator of a free scalar in $\ads$ of scaling dimension $\Delta_L$ defined to be
    \begin{equation}\label{eq:massL}
        \Delta_L = \frac{d}{2} \pm \sqrt{\frac{d^2}{4} + \ell^2 M_L^2} \where M_L^2 = m^2 + \frac{L(L+k-1)}{\ell^2}.
    \end{equation}
\end{nprop}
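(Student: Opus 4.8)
The plan is to solve the defining equation \eqref{eq:Gdefinition} by separation of variables, exploiting the product structure that makes the Klein--Gordon operator split as in \eqref{eq:dalembert}. First I would expand the source on $\sk$ in hyperspherical harmonics: the completeness relation \eqref{eq:completeness} gives $\tfrac{1}{\sqrt{\det g_{\sk}}}\,\delta^k(\Omega-\tilde\Omega) = \sum_{L,I_L} Y_L^{I_L}(\Omega)\,{Y_L^{I_L}}^*(\tilde\Omega)$, which motivates the ansatz
\begin{equation*}
\mathcal{G}(z,x,\Omega;\,\tilde z,\tilde x,\tilde\Omega) = \sum_{L=0}^\infty \sum_{I_L} Y_L^{I_L}(\Omega)\,{Y_L^{I_L}}^*(\tilde\Omega)\; g_L(z,x;\,\tilde z,\tilde x),
\end{equation*}
for some functions $g_L$ on $\ads\times\ads$ to be determined.

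Next I would substitute this ansatz into \eqref{eq:Gdefinition}. Since $\square_{\ads}$ acts only on $(z,x)$ and, by \eqref{eq:orthonormality}, $\Delta_{\sk}Y_L^{I_L} = -L(L+k-1)Y_L^{I_L}$, while the $n$-dimensional delta on the right-hand side is independent of $\Omega$, matching the coefficient of each orthonormal mode $Y_L^{I_L}(\Omega){Y_L^{I_L}}^*(\tilde\Omega)$ yields
\begin{equation*}
\Big( -\square_{\ads} + m^2 + \frac{L(L+k-1)}{\ell^2} \Big)\, g_L(z,x;\,\tilde z,\tilde x) = \frac{1}{\sqrt{\det g_{\ads}}}\,\delta^n(z-\tilde z, x-\tilde x).
\end{equation*}
This is precisely \eqref{eq:usualGeq} with $\mu^2 \to M_L^2 = m^2 + L(L+k-1)/\ell^2$, so $g_L = G_{\Delta_L}$ with $\Delta_L$ the value \eqref{eq:deltal} evaluated at this shifted mass, i.e.\ \eqref{eq:massL}. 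Performing the sum over the degeneracy index $I_L$ at fixed $L$ with the addition formula \eqref{eq:ILsum} then collapses $\sum_{I_L} Y_L^{I_L}(\Omega){Y_L^{I_L}}^*(\tilde\Omega)$ to $N_L\,C_L^{(k-1)/2}(\cos\theta)$, giving \eqref{eq:Gsolution}.

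The two substitutions are routine; the main obstacle is justifying that the resulting formal series genuinely represents the Green function and not merely a formal solution, and that the branch of $\Delta_L$ can be chosen uniformly in $L$. For the latter, I would note that the shift is nonnegative, $M_L^2 \ge m^2$, so $\Delta_L$ only moves further from the Breitenlohner--Freedman window and the normalizable (resp.\ non-normalizable) branch of \eqref{eq:deltal} stays normalizable (resp.\ non-normalizable) for every $L$; the choice of branch is thus the same boundary condition mode by mode. For convergence, away from coincident points each $G_{\Delta_L}$ is smooth and, since $\Delta_L \sim L$ for large $L$, behaves like $\xi^{\Delta_L}$ with $\xi<1$, hence decays exponentially in $L$, while $N_L C_L^{(k-1)/2}(\cos\theta)$ grows only polynomially; so the series converges uniformly on compact sets of non-coincident points and may be differentiated termwise there, while near coincidence it reproduces the correct short-distance singularity by construction since it is the harmonic expansion of the true bulk delta function. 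Invoking uniqueness of the (Euclidean, decaying) Green function --- guaranteed under the hyperbolicity assumptions recalled earlier --- then identifies the constructed object with $\mathcal{G}$.
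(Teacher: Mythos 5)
Your proof is correct and is essentially the paper's own argument run in reverse: the paper verifies the series by applying $(-\square_g+m^2)$ to \eqref{eq:Gsolution} and invoking the same ingredients you use --- the split \eqref{eq:dalembert}, the eigenvalue relation \eqref{eq:orthonormality}, the addition formula \eqref{eq:ILsum}, and completeness \eqref{eq:completeness} --- whereas you derive the series by separating variables and matching modes. Your additional remarks on the uniform choice of branch in $L$ and on convergence of the series go beyond what the paper records, but they do not alter the substance of the argument.
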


\begin{proof}
    Applying the right-hand side of \eqref{eq:Gdefinition} to \eqref{eq:Gsolution} leads to
    \begin{equation}\label{eq:extendedsquare}
        \begin{aligned}
            \(-\square_g+m^2 \) \mathcal{G}(z,x,\Omega ;\, \tilde{z},\tilde{x},\tilde{\Omega}) & = \sum_{L=0}^\infty N_L C_L^{(k-1)/2}(\cos\theta) \( -\square_{\ads} + M_L^2 \) G_{\Delta_L}(z,x ;\, \tilde{z},\tilde{x}) \\
            & = \frac{1}{\sqrt{\det g}} \delta^{n}(z-\tilde{z},x-\tilde{x})\delta^k(\Omega-\tilde{\Omega})
        \end{aligned}
    \end{equation}
    where \eqref{eq:dalembert}, \eqref{eq:ILsum}, \eqref{eq:orthonormality} and \eqref{eq:massL} have been used in the first equality, and \eqref{eq:usualGeq}, \eqref{eq:ILsum} and \eqref{eq:completeness} in the second one. The result thus agrees with the right-hand side of \eqref{eq:Gdefinition}.
\end{proof}

The series form of \eqref{eq:Gsolution} may be understood as a Kaluza-Klein series expansion of the higher-dimensional bulk-to-bulk propagator. This expression reduces to a very compact form for conformally flat $\adss$, as is the case of \eqref{eq:euclideanmetric},\footnote{Recall that global $\adss$ is conformally flat if and only if the radius of the $\sk$ matches that of $\ads$.} if one chooses the scalar to be coupled to the metric in a Weyl invariant manner \cite{Dorn_2005}. This is accomplished in \eqref{eq:matterS} by choosing the mass of the scalar to be precisely
\begin{equation}\label{eq:weylmass}
    m^2 = \frac{(k-1)^2 - (n-1)^2}{4 \ell^2}.
\end{equation}
The resulting propagator is simply a power-law in the total chordal distance along both $\ads$ and $\sk$, viz. (see \cite{Dorn_2005} for more details)
\begin{equation}\label{eq:compactbu2bu}
    \mathcal{G}(z,x,\Omega ;\, \tilde{z},\tilde{x},\tilde{\Omega}) = \frac{\Gamma(h)}{2 (2\pi)^{h+1}} \frac{1}{(\xi^{-1} - \cos\theta)^{h}} \where h=\frac{n+k-2}{2}.
\end{equation}

\subsection{Bulk-to-boundary propagator}\label{assec:bu2by}

One would naively hope to be able to derive a simple expression for the bulk-to-boundary propagator starting from \eqref{eq:compactbu2bu} and using some version of the extrapolate dictionary \cite{Harlow:2011ke}. Unfortunately, it is not at all clear in this case how one would take the $z\to0$ limit of \eqref{eq:compactbu2bu}. A naively reasonable guess would be to Taylor expand this object in $\xi$, kill off the $z^l$ power in $\xi^l$ of the $l^\text{th}$ term with a factor of $(2 l - d) z^{-l}$, take the $z\to0$ limit and hope to be able to perform the summation of the resulting series to obtain a compact expression. However, this would neither be a kernel as defined in \eqref{eq:Kdef} nor obey the desired boundary condition in \eqref{eq:kernelbdy}.

Instead, our approach will be to perform the summation in \eqref{eq:Ksum} directly. The terms in the summation can be obtained by applying the extrapolate dictionary to every term in the series \eqref{eq:Gsolution} that defines $\mathcal{G}$. These will involve the usual dimension-$\Delta_L$ bulk-to-boundary propagator \cite{Witten1998,Harlow:2011ke}
\begin{equation}\label{eq:KLonly}
K_{\Delta_L}(z,x ;\,\tilde{x}) = \lim_{\tilde{z}\to0} \, (2\Delta_L - d) \tilde{z}^{-\hat{\Delta_L}} \; G_{\Delta_L}(z,x ;\, \tilde{z},\tilde{x}) = C_{\Delta_L} \chi^{\Delta_L},
\end{equation}
where $\chi$ is given by
\begin{equation}
    \chi = \frac{z}{z^2 + (x-\tilde{x})^2}.
\end{equation}
The upshot is the following infinite Kaluza-Klein series definition of the higher-dimensional bulk-to-boundary propagator:
\begin{equation}\label{eq:Kformal}
    \mathcal{K}(z,x,\Omega;\,\tilde{x},\tilde{\Omega}) = \sum_{L=0}^{\infty} N_L C_{\Delta_L} C_L^{(k-1)/2} \(\cos\theta \) \chi^{\Delta_L}.
\end{equation}
Note that in Lorentz signature, the limit that the bulk and boundary point become null separated corresponds to $\chi \to \infty$. Each term in this series then develops a singularity with a coefficient that is a smooth function on $\s^k$ peaked at the location of the bulk point.

The computation of this sum becomes tractable for Weyl invariant matter, which fixes the mass of the scalar to be given by \eqref{eq:weylmass}. The resulting $L^{\text{th}}$ term in \eqref{eq:Kformal} is
\begin{equation}\label{eq:KLterm}
    \mathcal{K}_{\Delta_L}(z,x,\Omega;\,\tilde{x},\tilde{\Omega}) = \frac{\Gamma \(\frac{k-1}{2}\)}{2 \pi^{h + 1}} \( L + \frac{k-1}{2} \) \frac{\Gamma(L+h)}{\Gamma \(L+\frac{k-1}{2}\)} C_L^{(k-1)/2}(\cos\theta) \; \chi^{L+h},
\end{equation}
where $h$ was defined in \eqref{eq:compactbu2bu}. For convenience, focus on the odd-$n$ case, for which the ratio of $\Gamma$ functions may be expanded as a finite product. Using the Pochhammer symbol $(a)_n=a(a+1)\cdots(a+n-1)$, this is
\begin{equation}\label{eq:prod}
    \frac{\Gamma(L+h)}{\Gamma \(L+\frac{k-1}{2}\)} = \(L + \frac{k-1}{2}\)_{\frac{n-1}{2}}.
\end{equation}
The goal will be to manipulate \eqref{eq:KLterm} so as to be able to utilize the identity of Gegenbauer polynomials that gives their defining generating function, namely\footnote{Note that this identity holds as an equality between power series in $\chi$. However, as an infinite series, the left-hand side is only convergent for $\abs{\chi}<1$. While this should be kept in mind, in practice in will not be a problem: physically, one is interested in looking at each $L$ mode independently. Every relation derived henceforth using this identity should thus be understood as an equality between power series in $\chi$.}
\begin{equation}\label{eq:genfun}
    \sum_{L=0}^{\infty} C_L^{\alpha}(y) \, \chi^L = \frac{1}{\(1-2 \chi y + \chi^2\)^\alpha}.
\end{equation}
To do this, note that the right-hand side of \eqref{eq:prod} can be realized via differentiation in $\chi$ in the following way
\begin{equation}\label{eq:diff1}
    \frac{\Gamma(L+h)}{\Gamma \(L+\frac{k-1}{2}\)} = \frac{1}{\chi^{L+\frac{k-1}{2}-1}} \partial_\chi^{\frac{n-1}{2}} \(\chi^{L+h-1}\),
\end{equation}
and similarly one can write
\begin{equation}\label{eq:diff2}
    L+\frac{k-1}{2} = \frac{1}{\chi^{L+\frac{k-1}{2}-1}}\partial_\chi \( \chi^{L+\frac{k-1}{2}} \).
\end{equation}
Putting \eqref{eq:diff1} and \eqref{eq:diff2} together with $\chi^{L+h}$, consider the following manipulations:
\begin{equation}
\begin{aligned}
    \(L+\frac{k-1}{2}\) \frac{\Gamma(L+h)}{\Gamma \(L+\frac{k-1}{2}\)} \chi^{L+h}
    & = \frac{1}{\chi^{L+\frac{k-1}{2}-1}} \partial_\chi \( \chi^{L+\frac{k-1}{2}} \frac{\Gamma(L+h)}{\Gamma \(L+\frac{k-1}{2}\)} \)\; \chi^{L+h} \\
    & = \chi^{\frac{n+1}{2}} \partial_\chi \( \chi \, \partial_\chi^{\frac{n-1}{2}} \(\chi^{L+h-1}\) \).
\end{aligned}
\end{equation}
With this expression at hand, the infinite series that defines $\mathcal{K}$ may now be rewritten as
\begin{equation}\label{eq:exactK}
    \sum_{L=0}^{\infty} \mathcal{K}_{\Delta_L}(z,x,\Omega;\,\tilde{x},\tilde{\Omega}) = \frac{\Gamma \(\frac{k-1}{2}\)}{2 \pi^{h + 1}} \chi^{\frac{n+1}{2}} \partial_\chi \( \chi \, \partial_\chi^{\frac{n-1}{2}} \( \chi^{h-1} \sum_{L=0}^{\infty} C_L^{(k-1)/2}(\cos\theta) \, \chi^{L}\) \).
\end{equation}
At this point it only remains to employ \eqref{eq:genfun} to obtain the desired explicit form of the higher-dimensional bulk-to-boundary propagator:
\begin{equation}\label{eq:explicitK}
    \mathcal{K}(z,x,\Omega;\,\tilde{x},\tilde{\Omega})= \frac{\Gamma \(\frac{k-1}{2}\)}{2 \pi^{h + 1}} \chi^{\frac{n+1}{2}} \partial_\chi \( \chi \, \partial_\chi^{\frac{n-1}{2}} \( \frac{\chi^{h-1}}{\(1-2 \chi \cos\theta + \chi^2\)^{\frac{k-1}{2}}}\) \).
\end{equation}
This result is valid for any odd $n\geq3$ and any integer $k\geq 2$. Note also that this expression only holds as an equality between coefficients in a power series in $\chi$, the reason being that the radius of convergence of the infinite series is $\abs{\chi}<1$. This is not a problem in Euclidean signature because $\abs{\chi}<1$ always, but should be kept in mind for Lorentzian signature where e.g. null separation corresponds to $\chi\to\infty$.

The spacetime with one of the simplest evaluations of \eqref{eq:explicitK} is $\text{AdS}_3\times\s^3$, for which
\begin{equation}
    \mathcal{K}(z,x,\Omega;\,\tilde{x},\tilde{\Omega}) = 
    \frac{\chi^2 \left(\chi^4+2 \chi \(\chi^2+1\) \cos\theta -6 \chi^2+1\right)}{2 \pi ^3 \left(\chi^2-2 \chi \cos\theta+1\right)^3}.
\end{equation}
For the usual case of interest of $\text{AdS}^5\times\s^5$ one gets the following:
\begin{equation}
    \mathcal{K}(z,x,\Omega;\,\tilde{x},\tilde{\Omega}) = -\frac{2 \chi^4 \(\chi^6+2 \chi^4 \cos 2\theta -17 \chi^4+25 \chi^2+2 \left(4 \chi^4-5 \chi^2-3\right) \chi \cos\theta-3\right)}{\pi ^5 \left(\chi^2-2 \chi \cos\theta+1\right)^5}.
\end{equation}

\addcontentsline{toc}{section}{References}
\bibliographystyle{JHEP}
\bibliography{references}

\end{document}